\def\showall{0}
\def\useieeelayout{0}
\newcommand{\inACC}[1]{\if\useieeelayout1{#1}\fi\if\showall1{\color{green!50!black}In ACC: #1}\fi}
\newcommand{\inArxiv}[1]{\if\useieeelayout0{#1}\else\if\showall1{\color{blue}In ArXiV: #1}\fi\fi}
\title{\LARGE \bf
Quadrotor Control on $\mathrm{SU(2)}\times \Real^3$ with SLAM Integration$^*$
}
\author{Marcus Greiff$^{1}$, Patrik Persson$^{2}$, Zhiyong Sun$^{3}$, Karl \AA str\"om$^{2}$, and Anders Robertsson$^{1}$.%
\thanks{*The research leading to these results has  received funding from ELLIIT, and the Swedish Science Foundation (SSF) project “Semantic mapping and visual navigation for smart robots” (RIT15-0038).}%
\thanks{$^{1}$Department of Automatic Control, Lund University, Sweden, email: \texttt{[marcus.greiff,anders.robertsson]@control.lth.se}}
\thanks{$^{2}$Center for Mathematical Sciences, Lund University, Sweden, email:\texttt{[patrik.persson,kalle.astrom]@maths.lth.se}}
\thanks{$^{2}$Department of Electrical Engineering, Eindhoven University of Technology, the Netherlands, email:\texttt{z.sun@tue.nl}}
}
\title{Quadrotor Control on $\mathrm{SU(2)}\times \Real^3$ with SLAM Integration}
\author{Marcus Greiff, Patrik Persson, Zhiyong Sun, Karl \AA str\"om, and Anders Robertsson}
\begin{document}

    \maketitle

\if\useieeelayout1
\maketitle
\thispagestyle{empty}
\pagestyle{empty}
\fi

\begin{abstract}

\change{We present a trajectory tracking controller for a quadrotor unmanned aerial vehicle (UAV) configured on $\SUT\times \Real^3$, and relate this result to a family of geometric tracking controllers on $\SOT\times \Real^3$. The theoretical results are complemented by simulation examples, and the controller is subsequently implemented in practice and integrated with a simultaneous localization and mapping (SLAM) system through an extended Kalman filter (EKF). This facilitates the operation of the UAV without external motion capture systems, and we demonstrate that the proposed control system can be used for inventorying tasks in a supermarket environment without external positioning systems.}

\end{abstract}

\section{INTRODUCTION}
The UAV is quickly becoming a ubiquitous tool in modern society. There already exist commercially available products that are capable of autonomous flight through narrow forest paths while filming downhill mountain-bikers~\cite{skydio2021}. These robots have the potential to completely disrupt and revolutionize transportation and logistics~\cite{fernandez2019towards}. Due to the high margin pressure associated with these sectors, initial applications are likely to be found in inventorying and data collection using small UAVs, where the cost can be kept low and the solutions can be implemented under contemporary legislation. This motivates the development of control systems similar to~\cite{skydio2021} that can operate without external motion capture systems and perform simple inventorying tasks.

For this purpose, we present a nonlinear tracking controller for the UAV dynamics, which yields uniform local exponential stability (ULES) properties on a smaller domain of attraction, and asymptotic attractiveness on a much larger domain. Similar methods have been developed in the quaternion formalism~\cite{fresk2013full,brescianini2013nonlinear,brescianini2018tilt}, but the approach taken in this paper extends the results in~\cite{greiff2021similarities} using a special distance on $\SUT$. This permits a stability proof analogous to that of the geometric tracking controller on $\SOT$ in~\cite{lee2010geometric,goodarzi2013geometric}, which allows further generalization to the robust and globally stabilizing controllers in~\cite{lee2013nonlinear,lee2015global}. When compared to controllers based on the model predictive control (MPC) method, such as the learning-based MPC in~\cite{kaufmann2020deep} or the perception-aware MPC in~\cite{falanga2018pampc}, the geometric controllers are capable of similar feats of agility while requiring significantly less computational resources. This becomes particularly relevant in the context of small UAVs, which operate under computational constraints, where a large part of the processing power needs to be allocated to the on-board sensor fusion.

As the proposed controller relies on full-state information, a specialized system for simultaneous localization and mapping (SLAM) is implemented to generate real-time pose (position and attitude) estimates of the UAV  based on video feed streamed from a camera mounted on the UAV. For this purpose, a solution is implemented with ORB features \cite{orb} and the pre-integration proposed in \cite{preintegration}, permitting real-time estimation of the UAV state conditioned on the inertial measurement unit (IMU) measurements and pose estimates from the SLAM system. We emphasize that this solution enables safe autonomous flights without external motion capture systems, and demonstrate the theory with experiments using the Crazyflie 2.0 UAV~\cite{crazyflie2021twopointo} (see Fig.~\ref{fig:UAVconfigurations}).

\begin{figure}[h!]
\centering
\includegraphics[width=0.47\textwidth]{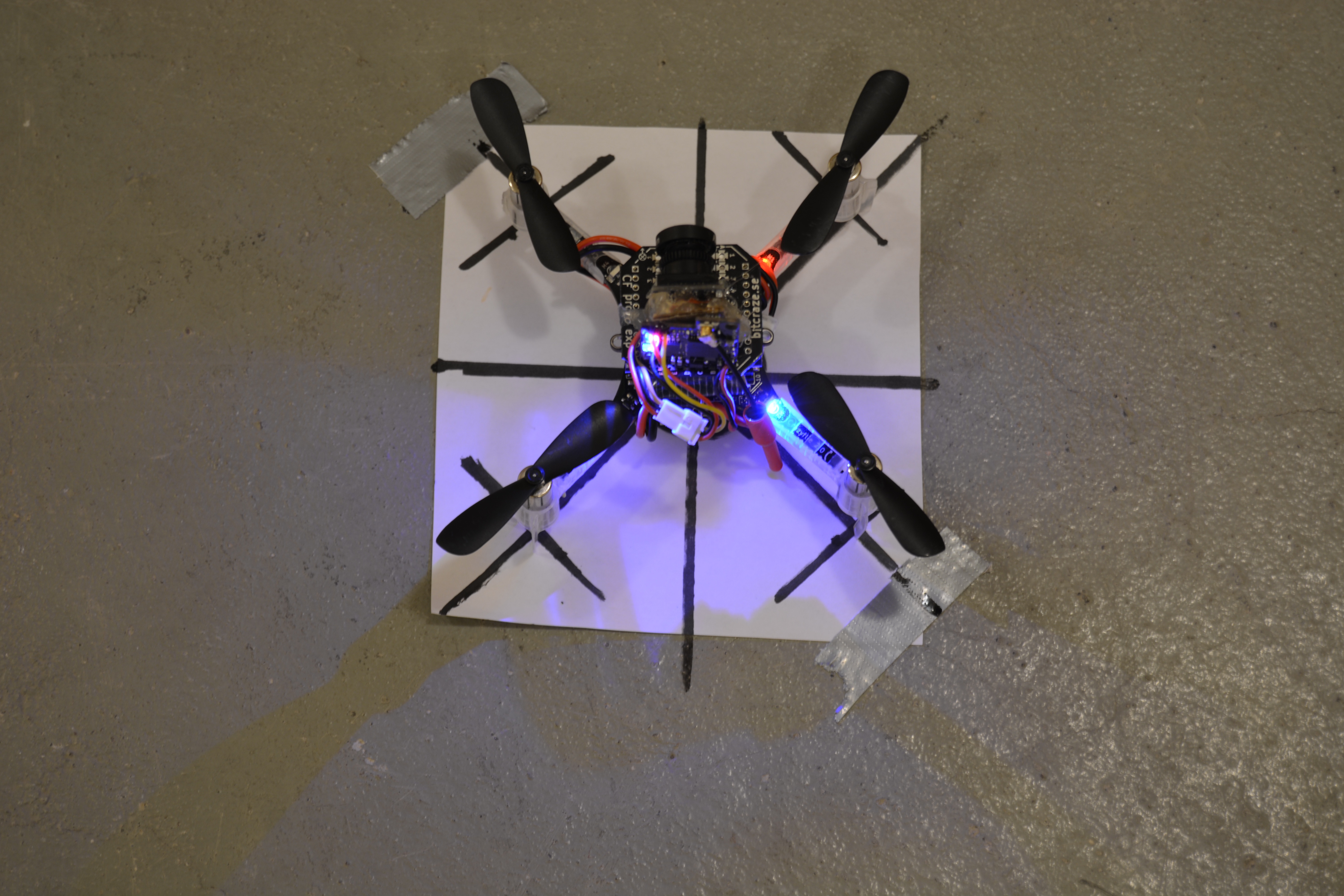}%
\hspace{3pt}\includegraphics[width=0.47\textwidth]{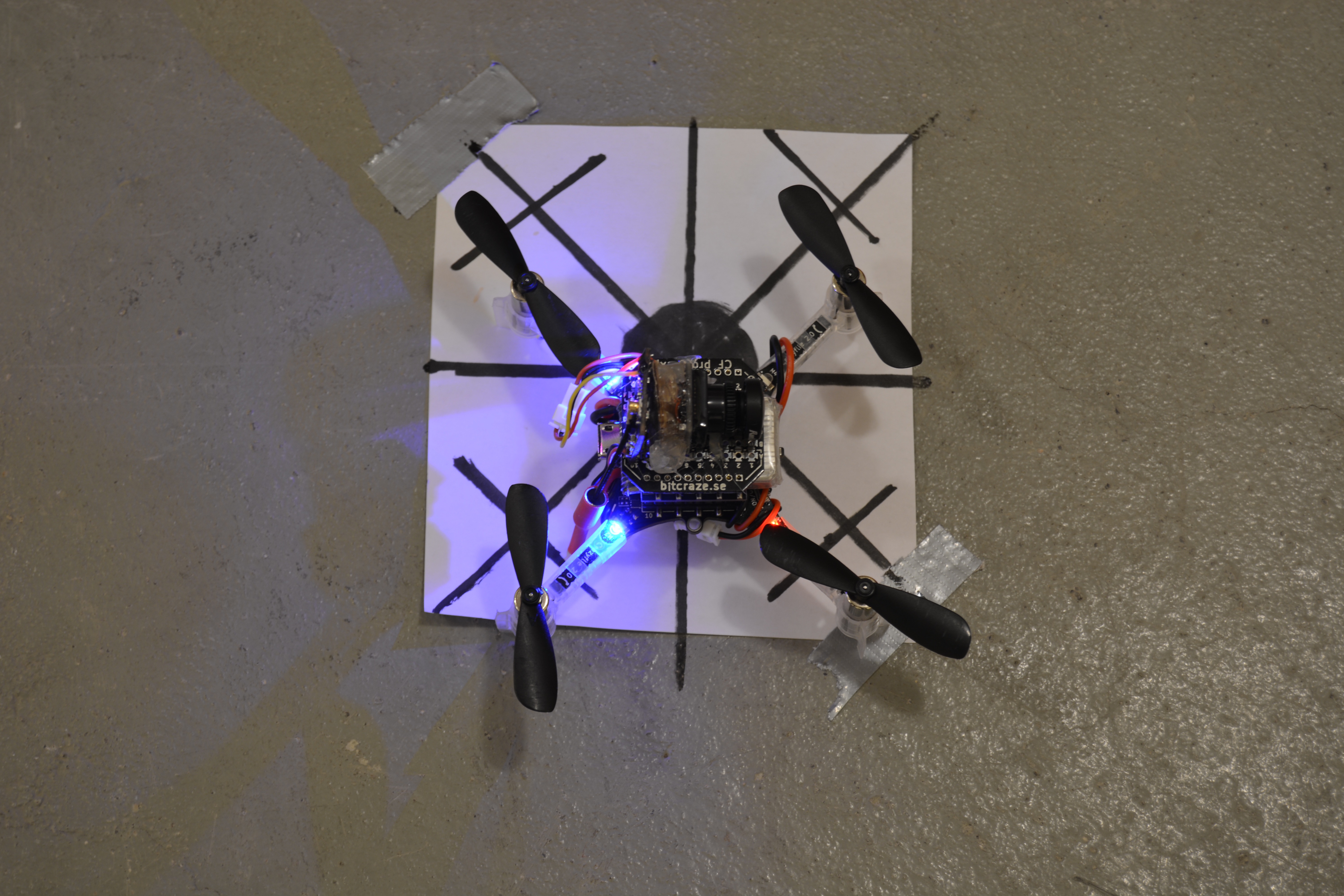}
\caption{The Crazyflie 2.0 used in the experiments with a camera attached. \textit{Left}: Initial configuration at $t=\tz$. \textit{Right}: Terminal configuration at $t=t_f$.}
\label{fig:UAVconfigurations}
\end{figure}


\inArxiv{\newpage}\subsection{Contributions}
The contributions of this paper are threefold, and can be summarized as follows: 
\begin{itemize}
    \item A  tracking geometric controller for UAV dynamics configured on $\SUT\times \Real^3$ is proposed, using the approach in~\cite{lee2010control} based on the attitude control result in~\cite{greiff2021similarities}. The controller differs to that in~\cite{lee2010control} in several ways that have meaningful consequences, and its stability proof further motivates the use of the results presented in~\cite{brescianini2018tilt}.
    \item A real-time compatible stack for SLAM is presented. The system tracks and matches ORB features \cite{orb} from a continuous video feed, utilizing the pre-integration proposed in \cite{preintegration} to leverage the sampled IMU-data.
    \item A demonstration of the controller in real-time is given, fusing estimates from the SLAM system and with other sensory data in an on-board extended Kalman filter (EKF), rendering the resulting control system capable of safely executing complex maneuvers that are relevant to supermarket inventorying.
\end{itemize}

\subsection{Notation} 
Matrices and vectors are written in bold font, with entries of a vector $\uvec$ as $u_i$, and the entries of a matrix \change{$\A$ are denoted by $[\A]_{ij}$}. The smallest eigenvalue of a real symmetric matrix \change{$\A\in\mathbb{R}^{n\times n}$} is denoted by \change{$\mineig(\A)$}, and its largest by \change{$\maxeig(\A)$}. The two-norm of $\uvec\in\Real^{n}$ is denoted by $\|\uvec \|=\sqrt{\uvec\Tr\uvec}$, and $\|\uvec \|_{\M} =\sqrt{\uvec\Tr\M\uvec}$ with a positive definite real matrix ${\M} \in\mathbb{R}^{n\times n}$. The trace of \change{$\A\in\mathbb{C}^{n\times n}$ is written $\Trace(\A)=\sum_{i = 1}^n A_{ii}$}, and $\Ucal(\Omega)$ denotes a uniform distribution over $\Omega$, and implies sampling every element of $\Omega$ with equal probability. Let $\Sbf:\Real^3\mapsto \Real^{3\times 3}$ such that for any $\avec, \bvec\in\Real^3$, $\Sbf(\avec)\bvec = \avec\times \bvec$\inACC{.}\inArxiv{, where
\begin{equation}
\Sbf(\avec) = \begin{bmatrix}
          0&  -a_3&  a_2\\
        a_3&     0& -a_1\\
       -a_2&   a_1&    0
\end{bmatrix}.
\end{equation}}

\subsection{Structure}
The mathematical preliminaries are given in Section~\ref{sec:prelim}, introducing a result on attitude control and identities pertaining to {\SUT}. This is followed by a presentation of the problem formulation and the main result in Section~\ref{sec:control}, where attitude controller is used to derive a nonlinear tracking feedback law for a UAV configured on $\SUT\times \Real^3$. The SLAM system and its integration are discussed in Section~\ref{sec:estimation}, followed by simulation and experimental results in Section~\ref{sec:numerical}. Finally, the conclusion and outlook in Section~\ref{sec:conclusion} close the paper.

\newpage\section{PRELIMINARIES}\label{sec:prelim}
In the mathematical preliminaries, we start by defining the configuration manifolds, and relate elements of $\SOT$ to elements $\SUT$ through a carefully constructed embedding.
\begin{definition}
Let $\SOT = \{\R\in\mathbb{R}^{3\times 3}\;|\; \R\Tr\R=\I, \;\det(\R)=1\}$, with an associated Lie algebra $\mathfrak{so}(3) = \{\Lbf_1\omega_1+\Lbf_2\omega_2+\Lbf_3\omega_3\in\mathbb{R}^{3\times 3}\;|\;\omegabf\in\mathbb{R}^3\}$ spanned by\label{def:SOT}
\begin{equation*}
\Lbf_1
\hspace{-2pt}=\hspace{-2pt}
\begin{bmatrix}
0&0&0\\
0&0&-1\\
0&1&0\\
\end{bmatrix}
\hspace{-2pt},\;
\Lbf_2
\hspace{-2pt}=\hspace{-2pt}
\begin{bmatrix}
0&0&1\\
0&0&0\\
-1&0&0\\
\end{bmatrix}
\hspace{-2pt},\;
\Lbf_3
\hspace{-2pt}=\hspace{-2pt}
\begin{bmatrix}
0&-1&0\\
1&0&0\\
0&0&0\\
\end{bmatrix}
\hspace{-3pt}.
\end{equation*}
\end{definition}
\vspace{4pt}

\begin{definition}\label{def:SUT}
Let $\SUT = \{\Xbf\in\mathbb{C}^{2\times 2}\;|\; \Xbf\Hr\Xbf=\I, \;\det(\Xbf)=1\}$, with an associated Lie algebra $\mathfrak{su}(2) = \{\Lbf_1\omega_1+\Lbf_2\omega_2+\Lbf_3\omega_3\in\mathbb{C}^{2\times 2}\;|\;\omegabf\in\mathbb{R}^3\}$ spanned by\label{def:SUT}
\begin{equation*}
\Lbf_1 =
\begin{bmatrix}
0&i\\
i&0
\end{bmatrix},\quad
\Lbf_2 =
\begin{bmatrix}
0&-1\\
1&0
\end{bmatrix},\quad
\Lbf_3 =
\begin{bmatrix}
i&0\\
0&-i
\end{bmatrix}.
\end{equation*}
\end{definition}
\vspace{4pt}


Here, we note that for any $\Xbf\in\SUT$, both $\Xbf$ and $-\Xbf$ map to the same element on $\SOT$. To see this, parametrize $\SUT$ by a unit vector $\qvec = (q_1,q_2,q_3,q_4)\Tr$, as
\begin{equation}
\Xbf = \begin{pmatrix}
q_1+iq_4   & -q_3 + iq_2\\
q_3 + iq_2 &q_1-iq_4
\end{pmatrix}\in \SUT,\label{eq:quaternion2specialunitary}
\end{equation}
which encompasses all of $\SUT$ by Definition~\ref{def:SUT}. Furthermore, we embed elements $\Xbf\in\SUT$ into $\R\in\SOT$ by 

\inACC{\begin{equation}\label{eq:quaternion2rotationmatrix}
\hspace{-2pt}\R=\hspace{-2pt}\begin{bmatrix}
q_1^2\con{+}q_2^2\con{-}q_3^2\con{-}q_4^2 \hspace{-0.5pt}&\hspace{-0.5pt}    2(q_2q_3\con{-}q_1q_4) \hspace{-0.5pt}&\hspace{-0.5pt}    2(q_2q_4\con{+}q_1q_3)\\
    2(q_2q_3\con{+}q_1q_4) \hspace{-0.5pt}&\hspace{-0.5pt} q_1^2\con{-}q_2^2\con{+}q_3^2\con{-}q_4^2 \hspace{-0.5pt}&\hspace{-0.5pt}      2(q_3q_4\con{-}q_1q_2)\\
    2(q_2q_4\con{-}q_1q_3) \hspace{-0.5pt} \hspace{-0.5pt}&\hspace{-0.5pt}     2(q_3q_4\con{+}q_1q_2) \hspace{-0.5pt}&\hspace{-0.5pt} q_1^2\con{-}q_2^2\con{-}q_3^2\con{+}q_4^2
\end{bmatrix}\hspace{-3pt}.
\end{equation}}
\inArxiv{\begin{equation}\label{eq:quaternion2rotationmatrix}
\hspace{-2pt}\R=\hspace{-2pt}\begin{bmatrix}
q_1^2{+}q_2^2{-}q_3^2{-}q_4^2 &    2(q_2q_3{-}q_1q_4) \hspace{-0.5pt}&\hspace{-0.5pt}    2(q_2q_4{+}q_1q_3)\\
    2(q_2q_3{+}q_1q_4) & q_1^2{-}q_2^2{+}q_3^2{-}q_4^2 &      2(q_3q_4{-}q_1q_2)\\
    2(q_2q_4{-}q_1q_3) &    2(q_3q_4{+}q_1q_2) & q_1^2{-}q_2^2{-}q_3^2{+}q_4^2
\end{bmatrix}\in\SOT.
\end{equation}}

\begin{definition}[Lie Maps]\label{def:groupactions}
Let $G$ be any of the above defined Lie groups, with algebra $\mathfrak{g}$. The $\emph{hat}$ map is denoted $[\cdot]_{G}^\land:\mathbb{R}^3\mapsto \mathfrak{g}$, the $\emph{vee}$ map is denoted $[\cdot]_{G}^\lor:\mathfrak{g}\mapsto \mathbb{R}^3$, and the associated exponential and logarithmic maps are denoted by $\text{Exp}_G:\mathfrak{g}\mapsto G$ and $\text{Log}_G:G\mapsto \mathfrak{g}$ respectively~\cite{hall2015lie,greiff2021similarities}.
\end{definition}

\inArxiv{
To simplify any implementation of the controllers, the 
\begin{definition}[Hat and vee maps of {\SOT}]
From Definitions~\ref{def:SOT} and~\ref{def:groupactions}, it follows that if $\Kbf = [\omegabf]_{\SOT}^{\land}\in\mathfrak{so}(3)$, then\label{def:veehatSOT}
\begin{equation*}
[\omegabf]_{\SOT}^{\land}  = \Sbf(\omegabf),\qquad 
[\Kbf]_{\SOT}^{\lor} = \begin{bmatrix}
[\Kbf]_{3,2}\\
[\Kbf]_{1,3}\\
[\Kbf]_{2,1}
\end{bmatrix}.
\end{equation*}
\end{definition}

\begin{definition}[Hat and vee maps of {\SUT}]
From Definitions~\ref{def:SOT} and~\ref{def:groupactions}, it follows that if $\Kbf = [\omegabf]_{\SUT}^{\land}\in\mathfrak{su}(2)$, then\label{def:veehatSUT}
\begin{equation*}
[\omegabf]_{\SUT}^{\land} = \begin{bmatrix}
i\omega_3 &  - \omega_2 + i\omega_1\\
\omega_2 + i\omega_1 & -i\omega_3
\end{bmatrix},\quad 
[\Kbf]_{\SUT}^{\lor} = \frac{1}{2}\begin{bmatrix}
      \Im([\Kbf]_{1,2} + [\Kbf]_{2,1})\\
      \Re([\Kbf]_{2,1} - [\Kbf]_{1,2})\\
      \Im([\Kbf]_{1,1} - [\Kbf]_{2,2})
\end{bmatrix}.
\end{equation*}
\end{definition}}

\begin{remark}
Here, we note that if $\Xbf\in\SUT$ and $\R\in\SOT$ represent the same attitude on \SOT, using the embedding in~\eqref{eq:quaternion2specialunitary} and~\eqref{eq:quaternion2rotationmatrix}, then, for any $\omegabf=\theta\uvec\in\mathbb{R}^3$,
\begin{align*}
&\Exp_{\SOT}([\omegabf]^\land_{\SOT})\in\SOT,&
&\Exp_{\SUT}([\omegabf/2]^\land_{\SUT})\in\SUT,
\end{align*}
both \change{represent the same attitude on \SOT, corresponding to a rotation of $\theta$ about a unit vector $\uvec\in\mathbb{R}^3$}. For any $\avec, \bvec\in\mathbb{R}^3$ with $\|\avec\| = \|\bvec\| $, we have the rotational composition
\begin{align}
\avec &= \R\bvec,&
\avec &= [\Xbf[\bvec]_{\SUT}^\land\Xbf\Hr]_{\SUT}^\lor.\label{eq:rotcomp}
\end{align}
\vspace{-12pt}

\end{remark}
\begin{definition}
In the following, we define a global frame $\{\Gcal\}$ spanned by three unit vectors $\evec_i$ with the $i$th element set to 1, and a body-fixed frame $\{\Bcal\}$ spanned by three unit vectors $\bvec_i$, which are related to the global frame by
\begin{equation}
\I=\begin{bmatrix}
\evec_1&\evec_2&\evec_3
\end{bmatrix}=\R\Tr
\begin{bmatrix}
\bvec_1&\bvec_2&\bvec_3
\end{bmatrix},
\end{equation}
where $\R$ rotates a vector from   $\{\Bcal\}$ to $\{\Gcal\}$\inArxiv{ (see Fig.~\ref{fig:geometry})}.
\end{definition}
In the following, we will at times refer to a \emph{reference rotation}, with $\R_r\in\SOT$ or $\Xbf_r\in\SUT$, with an associated set of body basis vectors $\bvec_{ri}$, and a \emph{desired rotation} $\R_d\in\SOT$ or $\Xbf_d \in\SUT$ with an associated set of body basis vectors $\bvec_{di}$ (for $i=1,2,3$). We also make frequent use of two closely related distances, here defined as follows.
\begin{definition}
Let $\Psi : \SOT\times\SOT \mapsto [0,2]$, where 
\begin{equation}\label{eq:distSOT}
    \Psi(\R_1, \R_2) = \frac{1}{2}\Trace(\I - \R_1\Tr\R_2).
\end{equation}\label{def:distSOT}
\end{definition}\vspace*{-10pt}
\begin{definition}\label{def:distSUT}
Let $\Gamma : \SUT\times\SUT \mapsto [0,2]$, where 
\begin{equation}\label{eq:distSUT}
    \Gamma(\Xbf_1, \Xbf_2) = \frac{1}{2}\Trace(\I - \Xbf_1\Hr\Xbf_2).
\end{equation}
\vspace{-12pt}

\end{definition}

\begin{subequations}\label{eq:FSFFull:contsys}
These distances facilitate elegant and powerful controller developments. The UAV dynamics are taken to be configured on $\SUT\times \Real^3$, with a state $(\pvec,\vvec,\Xbf,\omegabf)\in\Real^3\times\Real^3\times\SUT\times\Real^3$, and are to be driven along a reference $(\pvec_r,\vvec_r,\Xbf_r,\omegabf_r)\in\Real^3\times\Real^3\times\SUT\times\Real^3$, evolving \inArxiv{in time} by
\begin{align}
\dot\pvec_r &= \vvec_r &
\dot\pvec   &= \vvec  \\
m\dot\vvec_r &\hspace{4pt}=f_r\R_r\evec_3-mg\evec_3 & 
m\dot\vvec   &\hspace{4pt}=f\R\evec_3-mg\evec_3     \\
\dot\Xbf_r & = \Xbf_r[\omegabf_r/2]^{\land}_{\SUT} & 
\dot\Xbf   & = \Xbf[\omegabf/2]^{\land}_{\SUT} \label{eq:attitude}\\
\J\dot\omegabf_r&= \Sbf(\J\omegabf_r)\omegabf_r+\taubf_r & 
\J\dot\omegabf&= \Sbf(\J\omegabf)\omegabf+\taubf\label{eq:attituderate}
\end{align}
\end{subequations}
where $\R$ and $\R_r$ are computed by~\eqref{eq:rotcomp} from $\Xbf$ and $\Xbf_r$, respectively. In this notation, $\pvec\in\Real^3$ [m] defines the position of the UAV in $\{\Gcal\}$; $\vvec\in\Real^3$ [m/s] defines the velocity of the UAV in $\{\Bcal\}$; $\Xbf\in\SUT$ defines the attitude of the UAV; $\omegabf\in\Real^3$ [rad/s] denotes the usual attitude rates defined in $\{\Bcal\}$; $f>0$ [N] defines the thrust generated by the rotors; $\taubf$ [N$\cdot$m] denotes the torques defined in $\{\Bcal\}$ \inArxiv{(see Fig.~\ref{fig:geometry})}. The model is parameterized by a positive definite symmetric inertia matrix $\J\in\Real^{3\times 3}$, a positive mass $m>0$ [kg], and a constant positive gravitation acceleration $g>0$ [m/s$^2$].

\inArxiv{\begin{figure}
    \centering
    \includegraphics[width=0.4\textwidth]{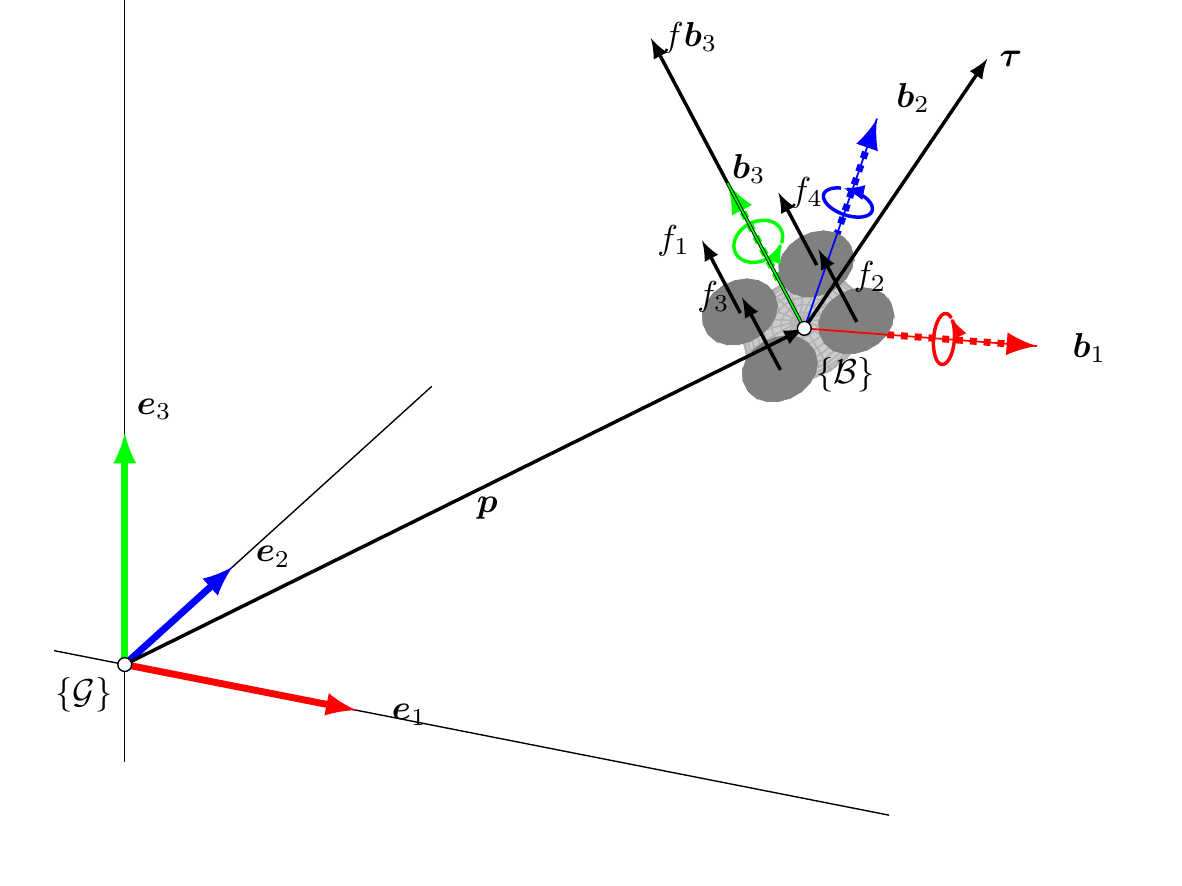}%
    \includegraphics[width=0.6\textwidth]{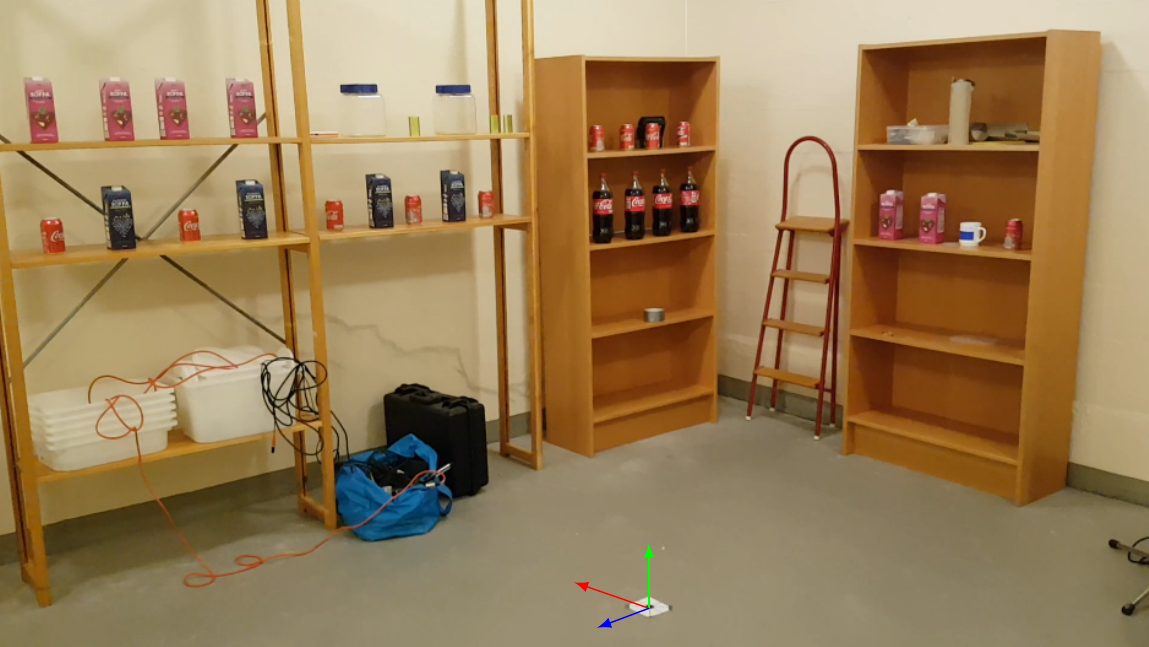}
    \vspace{-10pt}
    \caption{\textit{Left:} Depiction of the considered UAV geometry. \textit{Right:} The laboratory environment with $\{\Gcal\}$.}
    \label{fig:geometry}
\end{figure}}

For the purposes of this paper, we recall the attitude controller in \protect{\cite[Proposition 1]{greiff2021similarities}}, developed for the attitude subsystem characterized by~\eqref{eq:attitude} and~\eqref{eq:attituderate}, used to control $(\Xbf,\omegabf)$ along the reference trajectory $(\Xbf_r,\omegabf_r)$.
 
\begin{proposition}\label{thm:attitudeSUTcont}
\begin{subequations}
Let $\Xbf_e = \Xbf_r\Hr\Xbf\in\SUT$, and define
\inACC{\begin{align}
\evec_{\Xbf} = & \frac{1}{2}[\Xbf_e - \Trace(\Xbf_e)\I/2]^\lor_{\SUT}&&\in\mathbb{R}^3,\\
\evec_{\omegabf} =& \omegabf - [(\Xbf_e)\Hr[\omegabf_r]_{\SUT}^\land (\Xbf_e)]_{\SUT}^\lor &&\in\mathbb{R}^3,\\
\zvec =& (\|\evec_{\Xbf}\|,\|\evec_{\omegabf}\|)^\top&&\in\Real_{\geq 0}^2.
\end{align}}
\end{subequations}
\inArxiv{\begin{align}
&\evec_{\Xbf} =  \frac{1}{2}[\Xbf_e - \Trace(\Xbf_e)\I/2]^\lor_{\SUT}\in\mathbb{R}^3,&&
&\evec_{\omegabf} = \omegabf - [(\Xbf_e)\Hr[\omegabf_r]_{\SUT}^\land (\Xbf_e)]_{\SUT}^\lor \in\mathbb{R}^3,
\end{align}}
\inACC{\begin{subequations}\label{eq:SU2:contmatrices}
Take a set of gains $k_X, k_\omega,k_c>0$ such that the matrices
\begin{align}
\hspace{-2pt}\W^{aa}&=\begin{bmatrix}
\frac{k_ck_X}{\maxeig(\J)}&
-\frac{k_ck_w}{2\mineig(\J)}\\
-\frac{k_ck_w}{2\mineig(\J)}&
k_\omega - \frac{k_c}{4}\\
\end{bmatrix}&&\succ \Z,\\
\hspace{-4pt}\M_1^{aa}&= \frac{1}{2}\begin{bmatrix}
4k_X & -k_c\\ -k_c & \mineig(\J)
\end{bmatrix}&&\succ \Z,\\
\M_2^{aa}&=\frac{1}{2}\begin{bmatrix}
\frac{8k_X}{2-\phi} & k_c\\ k_c & \maxeig(\J)
\end{bmatrix}&&\succ \Z.
\end{align}
\end{subequations}}
\inArxiv{\noindent and let $\zvec = (\|\evec_{\Xbf}\|,\|\evec_{\omegabf}\|)^\top\in\Real_{\geq 0}^2$. Take a set of gains $k_X, k_\omega,k_c>0$ such that the matrices
\begin{align}\label{eq:SU2:contmatrices}
&\hspace{-4pt}\W^{aa}=\begin{bmatrix}
\frac{k_ck_X}{\maxeig(\J)}&
-\frac{k_ck_w}{2\mineig(\J)}\\
-\frac{k_ck_w}{2\mineig(\J)}&
k_\omega - \frac{k_c}{4}\\
\end{bmatrix}\succ \Z,&&
&\hspace{-4pt}\M_1^{aa}= \frac{1}{2}\begin{bmatrix}
4k_X & -k_c\\ -k_c & \mineig(\J)
\end{bmatrix}\succ \Z,&&
&\hspace{-4pt}\M_2^{aa}=\frac{1}{2}\begin{bmatrix}
\frac{8k_X}{2-\phi} & k_c\\ k_c & \maxeig(\J)
\end{bmatrix}\succ \Z.
\end{align}}

Then, for any initial error on the domain
\begin{align}\label{eq:thm:attitudeSUTcont:domain}
D^a {=}\Bigg\{
\begin{bmatrix}
\evec_{\Xbf}\\ \evec_{\omegabf}
\end{bmatrix}\in\Real^6\Bigg|
\begin{matrix}
\Gamma(\Xbf_r(\tz), \Xbf(\tz))\leq \phi<2,\\
\zvec(\tz)\Tr
\M_2^{aa}
\zvec(\tz)\leq k_X\phi
\end{matrix}
\Bigg\},
\end{align}
driving the system~\eqref{eq:attituderate} with a full-state feedback
\inACC{\begin{subequations}\label{eq:thm:SU2:feedback}
\begin{align}
\taubf = {-} k_X& \evec_{\Xbf} {-} k_\omega \evec_{\omegabf} {-}\Sbf(\J\omegabf)\omegabf\\ +\J[&-[\evec_{\omegabf}/2]_{\SUT}^\land\Xbf_e\Hr[\omegabf_r]_{\SUT}^\land \Xbf_e\label{eq:ffa}\\
&+ \Xbf_e\Hr[\dot{\omegabf}_r]_{\SUT}^\land \Xbf_e\\
&+ \Xbf_e\Hr[\omegabf_r]_{\SUT}^\land \Xbf_e[\evec_{\omegabf}/2]_{\SUT}^\land]_{\SUT}^\lor\label{eq:ffc}
\end{align}
\end{subequations}}
\inArxiv{\begin{subequations}\label{eq:thm:SU2:feedback}
\begin{align}
\taubf = {-} k_X& \evec_{\Xbf} {-} k_\omega \evec_{\omegabf} {-}\Sbf(\J\omegabf)\omegabf +\J[-[\evec_{\omegabf}/2]_{\SUT}^\land\Xbf_e\Hr[\omegabf_r]_{\SUT}^\land \Xbf_e
+ \Xbf_e\Hr[\dot{\omegabf}_r]_{\SUT}^\land \Xbf_e
+ \Xbf_e\Hr[\omegabf_r]_{\SUT}^\land \Xbf_e[\evec_{\omegabf}/2]_{\SUT}^\land]_{\SUT}^\lor
\end{align}
\end{subequations}}
yields an equilibrium point $\zvec=\Z \Rightarrow (\evec_{\Xbf}, \evec_{\omegabf}) = (\Z, \Z)\Rightarrow (\Xbf, \omegabf)=(\Xbf_r, \omegabf_r)$, which is UES on $D^a$.
\end{proposition}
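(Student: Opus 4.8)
The plan is to reproduce, on $\SUT$, the Lyapunov argument behind the geometric attitude tracking controllers on $\SOT$ in~\cite{greiff2021similarities,lee2010geometric,lee2010control}, with the distance $\Psi$ of Definition~\ref{def:distSOT} replaced by $\Gamma$ of Definition~\ref{def:distSUT}. The first task is to compute the closed-loop error dynamics. Differentiating $\Xbf_e=\Xbf_r\Hr\Xbf$ with~\eqref{eq:attitude}, using that $[\,\cdot\,]_{\SUT}^{\land}$ is skew-Hermitian and that~\eqref{eq:rotcomp} yields $\Xbf_e\Hr[\omegabf_r/2]_{\SUT}^{\land}\Xbf_e=[\tfrac12\R_e\Tr\omegabf_r]_{\SUT}^{\land}$ with $\R_e\in\SOT$ the rotation associated with $\Xbf_e$, one gets $\dot\Xbf_e=\Xbf_e[\evec_{\omegabf}/2]_{\SUT}^{\land}$. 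From this, $\Gamma(\Xbf_r,\Xbf)=\tfrac12\Trace(\I-\Xbf_e)$, and the identity $\Trace([\avec]_{\SUT}^{\land}[\bvec]_{\SUT}^{\land})=-2\avec\Tr\bvec$, it follows that $\dot\Gamma=\evec_{\Xbf}\Tr\evec_{\omegabf}$, while differentiating the definition of $\evec_{\Xbf}$ gives $\dot{\evec}_{\Xbf}=\mathbf{C}(\Xbf_e)\evec_{\omegabf}$ with $\|\mathbf{C}(\Xbf_e)\|=\tfrac14$ and symmetric part $\tfrac18\Trace(\Xbf_e)\I$, so that $\dot{\evec}_{\Xbf}\Tr\evec_{\omegabf}=\tfrac14(1-\Gamma(\Xbf_r,\Xbf))\|\evec_{\omegabf}\|^2\le\tfrac14\|\evec_{\omegabf}\|^2$. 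For the rate error, $\evec_{\omegabf}=\omegabf-[\Xbf_e\Hr[\omegabf_r]_{\SUT}^{\land}\Xbf_e]_{\SUT}^{\lor}$, and the product rule applied to $\Xbf_e\Hr[\omegabf_r]_{\SUT}^{\land}\Xbf_e$ together with $\dot\Xbf_e=\Xbf_e[\evec_{\omegabf}/2]_{\SUT}^{\land}$ show that the bracket multiplying $\J$ in~\eqref{eq:ffa}--\eqref{eq:ffc} equals $\tfrac{d}{dt}[\Xbf_e\Hr[\omegabf_r]_{\SUT}^{\land}\Xbf_e]_{\SUT}^{\lor}$ exactly; combined with the gyroscopic cancellation $-\Sbf(\J\omegabf)\omegabf$, substituting~\eqref{eq:thm:SU2:feedback} into~\eqref{eq:attituderate} collapses the error dynamics to the linear form $\J\dot{\evec}_{\omegabf}=-k_X\evec_{\Xbf}-k_\omega\evec_{\omegabf}$.

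Next I would take the Lyapunov candidate
\[
V=\tfrac12\evec_{\omegabf}\Tr\J\evec_{\omegabf}+k_X\Gamma(\Xbf_r,\Xbf)+k_c\evec_{\Xbf}\Tr\evec_{\omegabf},
\]
and sandwich it in $\zvec$. From the parametrization~\eqref{eq:quaternion2specialunitary} one has the exact identity $\|\evec_{\Xbf}\|^2=\tfrac14\Gamma(\Xbf_r,\Xbf)(2-\Gamma(\Xbf_r,\Xbf))$, so on $\Gamma(\Xbf_r,\Xbf)\le\phi<2$ one gets $2\|\evec_{\Xbf}\|^2\le\Gamma(\Xbf_r,\Xbf)\le\tfrac{4}{2-\phi}\|\evec_{\Xbf}\|^2$, hence $\zvec\Tr\M_1^{aa}\zvec\le V\le\zvec\Tr\M_2^{aa}\zvec$ with the matrices in~\eqref{eq:SU2:contmatrices}; positivity of $\M_1^{aa}$ and $\M_2^{aa}$ then makes $V$ positive definite and quadratically sandwiched in $\zvec$, with the by-product $V\ge\tfrac12 k_X\Gamma(\Xbf_r,\Xbf)^2$ coming from the same identity.

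Differentiating $V$ along the closed loop and inserting the error dynamics: the cross term of $\evec_{\omegabf}\Tr\J\dot{\evec}_{\omegabf}=-k_X\evec_{\Xbf}\Tr\evec_{\omegabf}-k_\omega\|\evec_{\omegabf}\|^2$ cancels against $k_X\dot\Gamma=k_X\evec_{\Xbf}\Tr\evec_{\omegabf}$; the term $k_c\dot{\evec}_{\Xbf}\Tr\evec_{\omegabf}$ is bounded by $\tfrac{k_c}{4}\|\evec_{\omegabf}\|^2$; and $k_c\evec_{\Xbf}\Tr\J^{-1}(-k_X\evec_{\Xbf}-k_\omega\evec_{\omegabf})$ is bounded via $\mineig(\J)$ and $\maxeig(\J)$, yielding $\dot V\le-\zvec\Tr\W^{aa}\zvec\le-\mineig(\W^{aa})\|\zvec\|^2\le 0$ with $\W^{aa}$ as in~\eqref{eq:SU2:contmatrices}. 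The conditions defining $D^a$ in~\eqref{eq:thm:attitudeSUTcont:domain} give $\Gamma(\Xbf_r(\tz),\Xbf(\tz))\le\phi$ and hence $V(\tz)\le\zvec(\tz)\Tr\M_2^{aa}\zvec(\tz)\le k_X\phi$; since $\dot V\le 0$ wherever $\Gamma<2$ and $V\ge\tfrac12 k_X\Gamma(\Xbf_r,\Xbf)^2$, every trajectory from $D^a$ remains in the compact set $\{\Gamma(\Xbf_r,\Xbf)\le\sqrt{2\phi}<2\}$, on which the upper sandwich still holds (with $\sqrt{2\phi}$ replacing $\phi$, which keeps the corresponding matrix positive definite since $\phi<\sqrt{2\phi}<2$). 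Therefore $\dot V$ is bounded above by a negative multiple of $V$, so $V$ --- and with it $\|\zvec\|^2\le V/\mineig(\M_1^{aa})$ --- decays exponentially with a rate and overshoot constant independent of $\tz$ (the closed-loop error system being autonomous). This is exactly UES of $\zvec=\Z$ on $D^a$, and $\zvec=\Z\Leftrightarrow(\evec_{\Xbf},\evec_{\omegabf})=(\Z,\Z)\Leftrightarrow(\Xbf,\omegabf)=(\Xbf_r,\omegabf_r)$ follows from the definitions.

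The main obstacle is the first paragraph: everything downstream is quadratic-form bookkeeping, but pinning down the closed-loop error dynamics exactly --- in particular verifying that the feedforward in~\eqref{eq:ffa}--\eqref{eq:ffc} coincides with $\J\tfrac{d}{dt}[\Xbf_e\Hr[\omegabf_r]_{\SUT}^{\land}\Xbf_e]_{\SUT}^{\lor}$ and establishing $\|\mathbf{C}(\Xbf_e)\|=\tfrac14$ together with $\dot\Gamma=\evec_{\Xbf}\Tr\evec_{\omegabf}$ --- requires careful use of the $\mathfrak{su}(2)$ half-angle hat and vee identities (notably the factor $2$ in the $\mathfrak{su}(2)$ commutator and in~\eqref{eq:attitude}) and of~\eqref{eq:rotcomp}; any slip there changes the precise coefficients appearing in~\eqref{eq:SU2:contmatrices}.
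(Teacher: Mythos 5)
Your proposal is correct and reconstructs, in full, the argument that the paper only sketches (deferring to \cite[Proposition~1]{greiff2021similarities}): the same Lyapunov candidate $\Lyap^a = k_X\Gamma(\Xbf_r,\Xbf) + k_c\,\evec_{\Xbf}\cdot\evec_{\omegabf} + \tfrac12\evec_{\omegabf}\cdot\J\evec_{\omegabf}$, the same sandwich $\zvec\Tr\M_1^{aa}\zvec\le\Lyap^a\le\zvec\Tr\M_2^{aa}\zvec$ obtained from $2\|\evec_{\Xbf}\|^2\le\Gamma\le\tfrac{4}{2-\phi}\|\evec_{\Xbf}\|^2$, the same bound $\dot\Lyap^a\le-\zvec\Tr\W^{aa}\zvec$, and the same appeal to \cite[Theorem~4.10]{khalil2002nonlinear}. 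Your intermediate identities all check out: $\dot\Xbf_e=\Xbf_e[\evec_{\omegabf}/2]^\land_{\SUT}$, $\dot\Gamma=\evec_{\Xbf}\Tr\evec_{\omegabf}$, $\dot\evec_{\Xbf}=\tfrac18\Trace(\Xbf_e)\evec_{\omegabf}+\tfrac12\evec_{\Xbf}\times\evec_{\omegabf}$ (whence $\dot\evec_{\Xbf}\Tr\evec_{\omegabf}=\tfrac14(1-\Gamma)\|\evec_{\omegabf}\|^2\le\tfrac14\|\evec_{\omegabf}\|^2$), and the bracket in \eqref{eq:thm:SU2:feedback} is exactly $(\der/\der t)\bigl(\Xbf_e\Hr[\omegabf_r]^\land_{\SUT}\Xbf_e\bigr)$, so the closed loop reduces to the autonomous error system $\J\dot\evec_{\omegabf}=-k_X\evec_{\Xbf}-k_\omega\evec_{\omegabf}$, for which uniformity is immediate. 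The only place you genuinely diverge is the forward-invariance step: the paper observes that the cross-term-free function $\Lyap^a|_{k_c=0}$ has derivative $-k_\omega\|\evec_{\omegabf}\|^2\le0$ and initial value at most $\zvec(\tz)\Tr\M_2^{aa}\zvec(\tz)\le k_X\phi$, which keeps $\Gamma(t)\le\phi$ for all $t\ge\tz$ and hence preserves the stated upper sandwich verbatim; you instead use $\Lyap^a\ge\tfrac12k_X\Gamma^2$ (from the identity $\|\evec_{\Xbf}\|^2=\tfrac14\Gamma(2-\Gamma)$) to confine trajectories to $\Gamma(t)\le\sqrt{2\phi}<2$ and then re-derive the upper bound with $\sqrt{2\phi}$ in place of $\phi$. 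Both routes close the proof; the paper's variant buys the tighter invariant set and keeps the constants exactly as in \eqref{eq:SU2:contmatrices}, while yours degrades $\maxeig(\M_2^{aa})$ (and thus the decay-rate estimate) for $t>\tz$ without affecting the UES conclusion.
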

\inArxiv{
\begin{proof}
The proof is given in~\protect{\cite[Proposition 1]{greiff2021similarities}}, and follows by the definition of a Lyapunov function candidate
\begin{equation}\label{eq:Lyapufunccand}
    \Lyap^a= k_X\Gamma(\Xbf_r, \Xbf) + c_a\evec_{\omegabf}\cdot \evec_{\Xbf} + \frac{1}{2}\evec_{\omegabf}\cdot \J\evec_{\omegabf},
\end{equation}
where all solutions remain on $D^a$ as $\Lyap^a|_{c_a=0}$ is non-increasing on $D^a$, also ensuring that $\Gamma(\Xbf_r,\Xbf)<\phi$ for all $t\geq \tz$. As $\|\zvec\|_{\M_1^{aa}}^2 \leq \Lyap^a\leq \|\zvec\|_{\M_2^{aa}}^2$ and $\dot{\Lyap}^a\leq - \|\zvec\|_{\W^{aa}}^2$, a proof of UES on $D^a$ follows by  ~\protect{\cite[Theorem 4.10]{khalil2002nonlinear}}.
\end{proof}
}
With these preliminaries, we proceed by posing and solving the control problem for the full UAV dynamics.

\section{THE UAV CONTROL PROBLEM}\label{sec:control}
We start by defining the control problem in Section~\ref{sec:controlproblem}. To solve this problem, intuition regarding the problem geometry and the controller design is given in Section~\ref{sec:geomintuition} and~\ref{sec:contIntuition}, respectively. Based on these ideas, and using the geometric controller in~\cite{greiff2021similarities} as a starting point, an analogous continuous feedback to that in~\cite{lee2010geometric} is derived for the system in~\eqref{eq:FSFFull:contsys} configured on $\SUT\times \Real^3$. This main theoretical result is given in Section~\ref{sec:contSUT}.

\subsection{Control Problem}\label{sec:controlproblem}

\begin{problem}\label{problem:FSFFull}
Consider a system with a state $\xvec=(\pvec, \vvec,\Xbf,\omegabf)\in \Real^3\times\Real^3\times \SUT\times \Real^3$, with an associated reference trajectory $\xvec_r = (\pvec_r, \vvec_r,\Xbf_r,\omegabf_r)\in\Real^3\times\Real^3\times \SUT\times \Real^3$, driven by $(f,\taubf)\in\mathbb{R}_{\geq 0}\times \Real^3$ and $(f_r,\taubf_r)\in\mathbb{R}_{\geq 0}\times \Real^3$, respectively, evolving by the UAV dynamics in~\eqref{eq:FSFFull:contsys}. Assume that the state $\xvec$ is known and design a full state feedback law such that $\xvec\to\xvec_r$ as $t\to\infty$, and characterize the stability properties of the closed-loop system.
\end{problem}

\subsection{Geometric Intuition}\label{sec:geomintuition}
To provide some intuition regarding the distance in~\eqref{eq:distSUT} and aid the developments, consider two elements $\Xbf_1, \Xbf_2\in\SUT$. Take the conjugate product $\Xbf_e=\Xbf_1\Hr\Xbf_2$ to be a rotation of $\theta$ about a unit vector $\uvec=(u_1,u_2,u_3)\Tr$, as
\begin{equation*}
\Xbf_e \triangleq
\begin{pmatrix}
\cos(\tfrac\theta2)+iu_3\sin(\tfrac\theta2)   & -u_2\sin(\tfrac\theta2) + iu_1\sin(\tfrac\theta2)\\
u_2\sin(\tfrac\theta2) + iu_1\sin(\tfrac\theta2) &\cos(\tfrac\theta2)-iu_3\sin(\tfrac\theta2)
\end{pmatrix},
\end{equation*}
then $\Gamma(\Xbf_1, \Xbf_2)= 1 - \cos(\tfrac\theta2)$, and $\forall\Gamma(\Xbf_1, \Xbf_2)\leq\phi<2$,
\begin{equation}\label{eq:boundsdistSU2}
\frac{1}{2}\sin^2(\tfrac\theta2) \leq \Gamma(\Xbf_1, \Xbf_2)\leq \frac{1}{2-\phi}\sin^2(\tfrac\theta2).
\end{equation}
for any $\phi$ upper bounding $\Gamma(\Xbf_1, \Xbf_2)$ \inArxiv{(see Figure~\ref{fig:geomintA})}. This will be used in the stability analysis of the controllers on $\SUT$. Specifically, \change{if $\evec_{\Xbf} =  \frac{1}{2}[\Xbf_e - \Trace(\Xbf_e)\I/2]^\lor_{\SUT}=\frac{1}{2}\sin(\tfrac\theta2)\uvec$ with $\Xbf_e = \Xbf_1^*\Xbf_2$, $\|\evec_{\Xbf}\|^2 = \frac{1}{4}\sin^2(\tfrac\theta2)$, then~\eqref{eq:boundsdistSU2} becomes}
\begin{equation}\label{eq:boundsdistSU2B}
2\|\evec_{\Xbf}\|^2 \leq \Gamma(\Xbf_1, \Xbf_2)\leq \frac{4}{2-\phi}\|\evec_{\Xbf}\|^2.
\end{equation}
\inArxiv{
\begin{figure}
    \centering
    \inACC{\includegraphics[width=\textwidth]{figures/distbound_updated.png}}
    \inArxiv{\includegraphics[width=\textwidth]{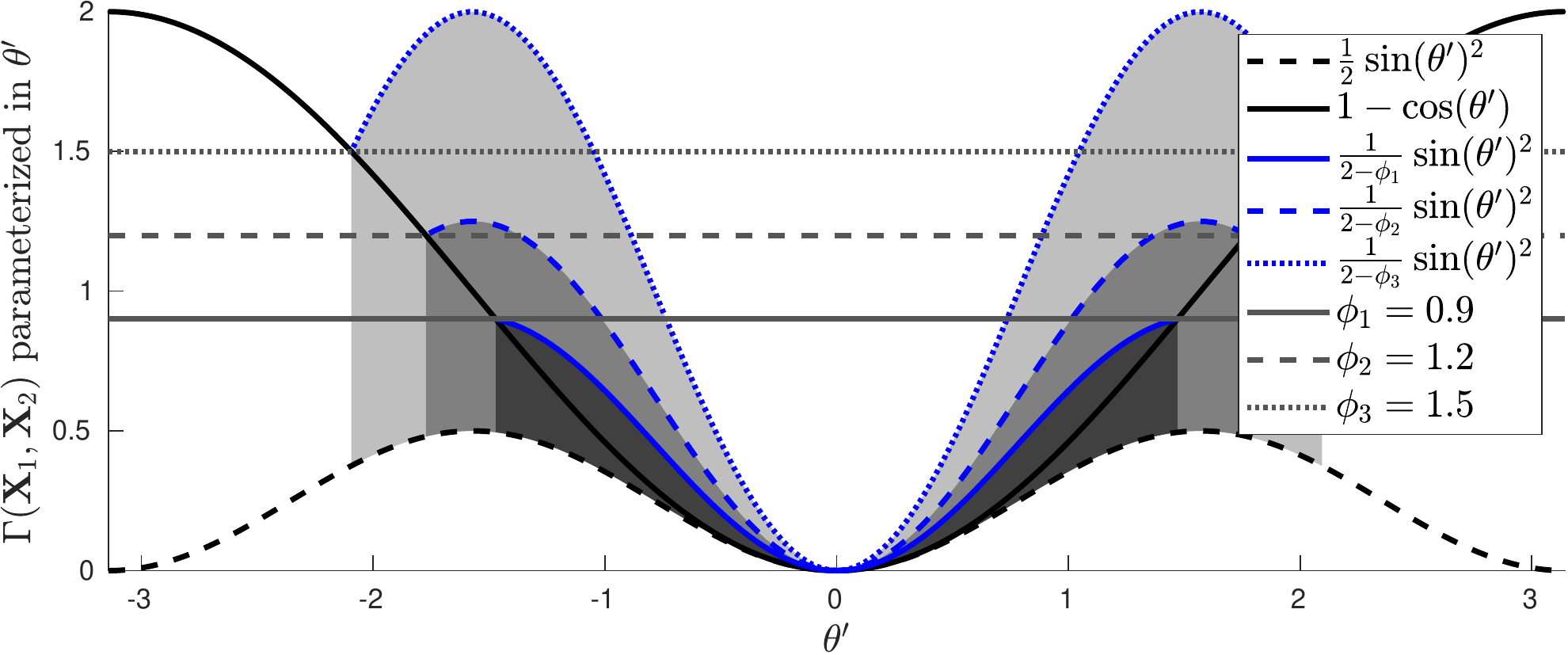}}
\caption{Illustration of the distance $\Gamma$ with the element $\Xbf_1, \Xbf_2$ representing a rotation of $\theta$ about an arbitrary unit axis, with $\theta^{\prime}=\theta/2$. Illustrates the upper and lower bounds in~\eqref{eq:boundsdistSU2} for three different values of $\phi \in\{0.9,1.2,1.5\}$.}
    \label{fig:geomintA}
\end{figure}}

This observation leads to several insightful geometric relationships with respect to the sine and cosine of the angle of the eigen-axis rotation between two vectors. For future reference, we first make some preliminary geometric observations with respect to the vectors $\bvec_{d3}=\R_d\evec_3$ and $\bvec_{3}=\R\evec_3$, and their cosine angle $\bvec_{d3}\cdot \bvec_{3} = \cos(\theta)$.
\begin{itemize}
\item Firstly, we note that
\begin{align}\label{eq:firstobs}
\Gamma(\Xbf_d,\Xbf) &\leq \phi < 1-1/\sqrt{2} \Rightarrow \cos(\theta) > 0.
\end{align}
\item Secondly, as $\sin^2(\theta)\leq 4\sin^2(\theta/2)$ for all $\cos(\theta)>0$,
\begin{align}\label{eq:secondobs}
\hspace{-5pt}\sin^2(\theta)\leq
2^4\|\evec_{\Xbf}\|^2\leq 2^3\Gamma(\Xbf_d,\Xbf)\leq 2^3\phi\triangleq \alpha^2.
\end{align}
\end{itemize}
These geometric relationships are illustrated in~\Fig~\ref{fig:geometrySU2}, and from these observations, it is clear that for any
\begin{equation}
\Gamma(\Xbf_d,\Xbf)\leq \phi<2^{-3}
\Rightarrow \cos(\theta)>0\Rightarrow 
\alpha<1
\end{equation}
\begin{figure}
    \centering
    \inACC{\includegraphics[width=\columnwidth]{figures/example_dustances_SU2_FSFFullfig.png}
\vspace{-20pt}}
\inArxiv{\includegraphics[width=\columnwidth]{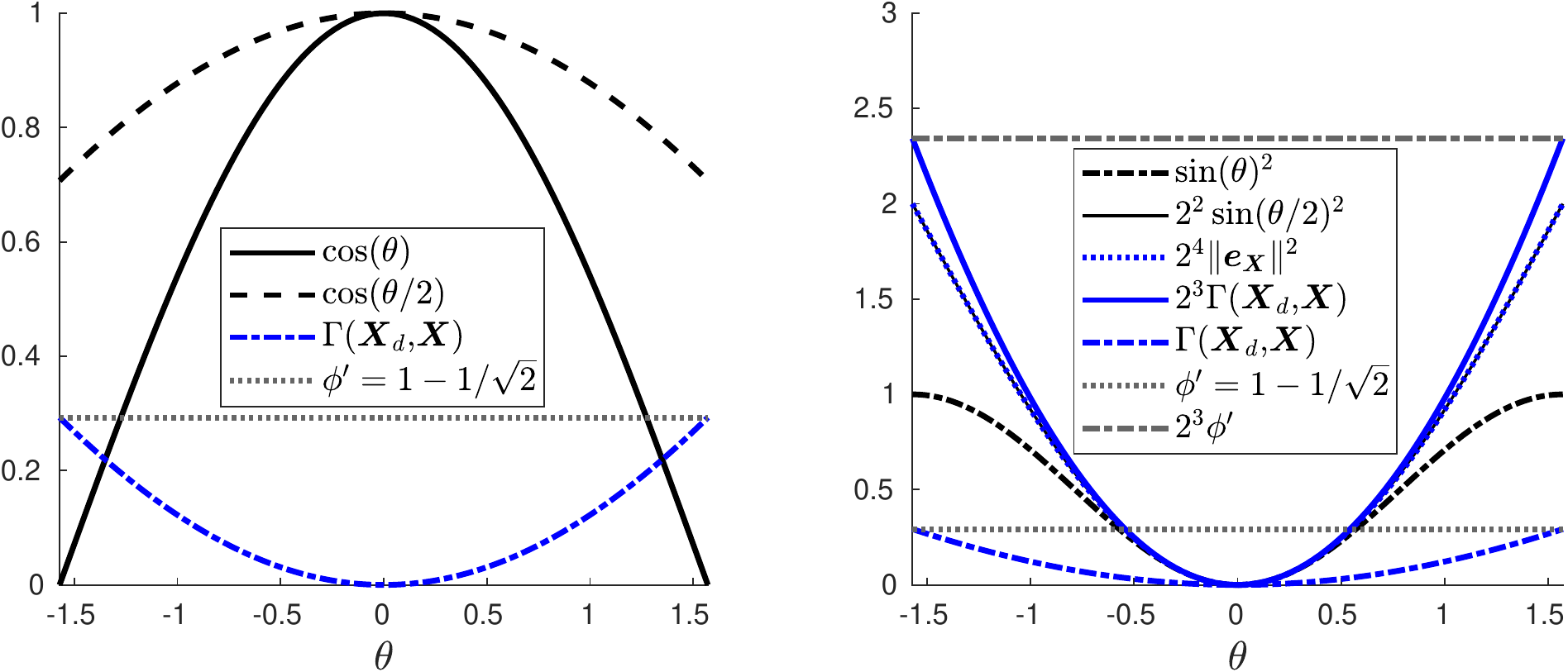}}
\caption{Illustration of the geometric relationships that facilitate the stability proof. \textit{Left:} Expressions relating to the cosine of the rotation angle $\theta$. \textit{Right:} Expressions relating to the squared sine of the rotation angle $\theta$.}
    \label{fig:geometrySU2}
\end{figure}

\inArxiv{\begin{remark}
The condition on $\phi$ may seem overly restrictive, but it is worth noting that if $\Xbf_e \in\SUT$ is parameterized in the more common ZYX Tait-Bryan rotation angles (pitch, roll, and yaw), as $\boldsymbol{\eta}\in\Real^3$, then $\|\etabf\|^2\approx 4\Psi(\R_d,\R) \approx 8\Gamma(\Xbf_d, \Xbf)$. As such, this condition in \eqref{eq:secondobs} is more restrictive than $\Psi(\R_d,\R)\leq 1$ in~\cite{lee2010control} (by approximately a factor of four). But \eqref{eq:secondobs} still permits significant attitude errors when considered in the Tait-Bryan rotation angles.
\end{remark}}

\subsection{Controller Intuition}\label{sec:contIntuition}
A solution to Problem~\ref{problem:FSFFull} when considered on $\SOT\times \Real^3$ is given in~\cite{lee2010geometric}, with more detail in~\cite{lee2010control}. The developments here follow a similar idea, according to  the observations in Section~\ref{sec:geomintuition}. Consider a set of translation control errors, defined as $\evec_{\pvec}=\pvec - \pvec_r\in\Real^3$ and $\evec_{\vvec} = \vvec - \vvec_r\in\Real^3$. Combined with the reference acceleration, these errors can be used to compute a desired force in the frame $\{\Gcal\}$, as
\begin{equation}\label{eq:FSF:desforce}
\fvec_d = -k_p\evec_{\pvec}-k_v\evec_{\vvec}+mg\evec_3 + m\ddot\pvec_r.
\end{equation}

As the controlled system in~\eqref{eq:FSFFull:contsys} is only capable of generating forces along the $\bvec_3$-direction, its attitude needs to be controlled to a desired attitude, $\R_d(t)\in\SOT$, which transiently may differ from $\R_r(t)\in\SOT$ when correcting for the errors in $\evec_{\pvec}$ and $\evec_{\vvec}$. Consider expressing this attitude in terms of a set of desired body basis vectors $\bvec_{d,i}\in\Real^3$, forming a desired body frame $\{\Bcal_d\}$. It is clear that $\bvec_{d,3}=\fvec_d/\|\fvec_d\|$, but the final degree of freedom can be fixed in many ways. Three such examples are given below, where:
\begin{itemize}
\item[(i)] the vector $\bvec_{d,1}$ is provided explicitly;
\item[(ii)] the vector $\bvec_{d,1}$ is computed from $\R_r$;
\item[(iii)] the vector $\bvec_{d,1}$ is defined with respect to $\bvec_{d,3}$.
\end{itemize}

In the case (ii), the desired body direction can be computed through a sequence of projections as outlined in~\cite{lee2010geometric}, where
\begin{equation}\label{eq:bvec1SO3}
\bvec_{d,1} = -\frac{1}{\|\bvec_{d,3}\times \bvec_{r,1}\|}(\bvec_{d,3}\times(\bvec_{d,3}\times \bvec_{r,1})).
\end{equation}
This permits a construction of the desired rotation, attitude rate and attitude rate time-derivative for case (i) and (ii), as
\begin{subequations}\label{eq:FSF:computedesref}
\begin{align}
\R_d &= \begin{bmatrix}
\bvec_{d,1} & (\bvec_{d,3}\times \bvec_{d,1}) & \bvec_{d,3}
\end{bmatrix}\in\SOT,\\
\omegabf_d &= [\R_d\Tr\dot\R_d]_{\SOT}^{\lor}\in\Real^3,\label{eq:SO3dera}\\
\dot\omegabf_d &=  [\dot\R_d\Tr\dot\R_d+\R_d\Tr\ddot\R_d]_{\SOT}^{\lor}\in\Real^3.\label{eq:SO3derb}
\end{align}
\end{subequations}
In the case (iii), the desired attitude can be formed by computing an angle $\beta \triangleq \mathrm{atan2}(\sqrt{f_{d1}^2 + f_{d2}^2}, f_{d3})$, and defining
\begin{subequations}
\begin{align}\label{eq:defqd}
\nvec  &= (f_{d1}^2 + f_{d2}^2)^{-1/2}(-f_{d2},f_{d1},0)\Tr\\
\Xbf_A &= \Exp_{\SUT}([\beta\nvec/2]^\land_{\SUT})\\
\Xbf_B &= \Exp_{\SUT}([\psi_r\evec_3/2]^\land_{\SUT})
\end{align}
\end{subequations}
where $\psi_r$ parameterizes a desired rotation about the body $\bvec_{d,3}$ vector. With these definitions, the desired attitude can be defined as the composition $\Xbf_d = \Xbf_A\Xbf_B$, computing the desired rates and accelerations through the inverse kinematics, similar to~\eqref{eq:SO3dera} and~\eqref{eq:SO3derb}, or by numerical differentiation. If considering the continuous attitude controller in Proposition~\ref{thm:attitudeSUTcont}, this desired attitude needs to be chosen with care such that $\Xbf_d$ is continuous in time to avoid dynamical unwinding~\cite{mayhew2011quaternion}. This can be done using the ideas in~\cite{mayhew2012path}, or directly using the distance in Definition~\ref{def:distSUT}.

{\begin{remark}\label{rem:compXd}
In a controller implementation running at a time-step of $h$ [s], enforcing continuity in $\Xbf_d(t)$ can be done at a time $t$ by computing one of the two elements $\bar\Xbf_d(t)\in\SUT$ associated with $\R_d(t)\in\SOT$, taking
\begin{equation*}
\Xbf_d(t)=
\begin{cases}
+\bar\Xbf_d(t),&\text{if}\;\;\Gamma(\bar\Xbf_d(t),\Xbf_d(t-h))<1\\
-\bar\Xbf_d(t),&\text{otherwise}
\end{cases}.
\end{equation*}
The computation of $\bar\Xbf_d$ from $\R_d$ can be done though~\eqref{eq:rotcomp}.
\end{remark}}

\subsection{Continuous feedback on $\SUT\times \Real^3$}\label{sec:contSUT}
With this intuition, the main result is stated as follows.
\begin{proposition}[Geometric Control on $\SUT\times\Real^3$]\label{prop:geomSUT}
Consider the dynamics \eqref{eq:FSFFull:contsys} controlled by a feedback where:
\begin{itemize}
\item the torques, $\taubf$, are computed by the controller Proposition~\ref{thm:attitudeSUTcont} implemented to track a trajectory $({\Xbf}_d,{\omegabf}_d,\dot\omegabf_d)$;
\item the desired attitude reference trajectory is formed by \eqref{eq:FSF:computedesref}, expanding $\R_d$ or $\qvec_d$ into $\Xbf_d\in \SUT$, and enforcing continuity of the desired reference on $\SUT$;
\item the actuating force is computed as $f = \fvec_d\cdot \R\evec_3$, with the desired force computed as described in~\eqref{eq:FSF:desforce}.
\end{itemize}
Assume that, for all $t\geq \tz$:
\begin{itemize}
\item[(A1)] the desired reference $({\Xbf}_d,{\omegabf}_d,\dot\omegabf_d)$ is well defined;
\item[(A2)] there exists a bound $\|mg\evec_3+m\ddot\pvec_r\|\leq B_f$;
\item[(A3)] and $\|\evec_{\pvec}(\tz)\|<B_p$ for a fixed $B_p>0$;
\item[(A4)] the initial errors satisfy $\Gamma(\Xbf_d(\tz),\Xbf(\tz))\leq \phi< 2^{-3}$;
\item[(A5)] the controller parameters $(k_p,k_v,k_X,k_\omega,c_a,c_p)\hspace{-1pt}\in\hspace{-1pt}\Real^{6}_{>0}$ are chosen such that for $\alpha = 2\sqrt{2\phi}$, the matrices
\inACC{\begin{align}\label{eq:FSFFullMatrices}
\M_1^{pp}&\triangleq 
\frac{1}{2}\begin{bmatrix}
k_p&-c_p\\
\star & m
\end{bmatrix},\quad
\M_2^{pp}\triangleq 
\frac{1}{2}\begin{bmatrix}
k_p&c_p\\
\star & m
\end{bmatrix},
\\
\W^{pp}&\triangleq 
\begin{bmatrix}
\frac{c_pk_p}{m}(1 - \alpha)&-\frac{c_pk_v}{2m}(1 + \alpha)\\
\star & k_v(1-\alpha)-c_p
\end{bmatrix},\nonumber
\end{align}}
\inArxiv{\begin{align}\label{eq:FSFFullMatrices}
\M_1^{pp}&\triangleq 
\frac{1}{2}\begin{bmatrix}
k_p&-c_p\\
\star & m
\end{bmatrix},\quad
\M_2^{pp}\triangleq 
\frac{1}{2}\begin{bmatrix}
k_p&c_p\\
\star & m
\end{bmatrix},&
\W^{pp}&\triangleq 
\begin{bmatrix}
\frac{c_pk_p}{m}(1 - \alpha)&-\frac{c_pk_v}{2m}(1 + \alpha)\\
\star & k_v(1-\alpha)-c_p
\end{bmatrix},
\end{align}}
are all positive definite, and there exist a matrix
\begin{align}\label{eq:FSFFullMatrices2}
\W^{pa} &\triangleq 4\begin{bmatrix}
\frac{B_fc_p}{m} & 0 \\
B_f + k_pB_p & 0
\end{bmatrix},
\end{align}
such that $B_z=4\mineig(\W^{aa})\mineig(\W^{pp})-\|\W^{pa}\|^2>0$.
\end{itemize}
Consider a domain
\begin{align}\label{eq:thm:FSFSUTcont:domain}
\inACC{\hspace{-2pt}}D \inACC{\hspace{-2pt}}=\inACC{\hspace{-2pt}}
\begin{Bmatrix}\inACC{\hspace{-3pt}}
\begin{bmatrix}
\evec_{\pvec}(\tz)\\
\evec_{\vvec}(\tz)\\
\evec_{\Xbf}(\tz)\\
\evec_{\omegabf}(\tz)
\end{bmatrix}
\inACC{\hspace{-2pt}}\in\inACC{\hspace{-2pt}}\Real^{12}\inArxiv{\hspace{3pt}}\vrule\;
\begin{matrix}
\Gamma(\Xbf_d(\tz), \Xbf(\tz))=\phi_{\circ}\leq \phi<2^{-3},\\
\|\evec_{\omegabf}(\tz)\|^2\leq \frac{2}{\maxeig(\J)}k_X(\phi-\phi_{\circ}),\\
\maxeig(\M_2^{aa})\|\zvec_a(\tz)\|^2+\hspace{1.5pt}\hphantom{\frac{1}{2}k_pB_p^2}\\
\maxeig(\M_2^{pp})\|\zvec_p(\tz)\|^2
\leq \frac{1}{2}k_pB_p^2
\end{matrix}\inACC{\hspace{-2pt}}
\end{Bmatrix}\inACC{\hspace{-2pt}},
\end{align}
where $\zvec_p=(\|\evec_{\pvec}\|;\|\evec_{\vvec}\|)\in\Real^{2}_{\geq 0}, \zvec_a=(\|\evec_{\Xbf}\|;\|\evec_{\omegabf}\|)\in\Real^{2}_{\geq 0}$. Given the assumptions (A1)-(A5), the equilibrium point $(\evec_{\pvec},\evec_{\vvec},\evec_{\Xbf},\evec_{\omegabf})=(\Z,\Z,\Z,\Z)$ is UES on $D$.
\end{proposition}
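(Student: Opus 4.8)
The plan is to mirror the cascade structure of the closed loop with a composite Lyapunov function $\Lyap=\Lyap^p+\Lyap^a$, where $\Lyap^a$ is the attitude function~\eqref{eq:Lyapufunccand} of Proposition~\ref{thm:attitudeSUTcont}, applied with the desired reference $(\Xbf_d,\omegabf_d,\dot\omegabf_d)$ in place of $(\Xbf_r,\omegabf_r,\dot\omegabf_r)$, and
\begin{equation*}
\Lyap^p=\tfrac12 k_p\|\evec_{\pvec}\|^2+\tfrac12 m\|\evec_{\vvec}\|^2+c_p\,\evec_{\pvec}\cdot\evec_{\vvec}.
\end{equation*}
Because the commanded thrust $f=\fvec_d\cdot\R\evec_3$ acts along $\bvec_3=\R\evec_3$, the net specific force is $(\fvec_d\cdot\bvec_3)\bvec_3$ rather than $\fvec_d$; substituting~\eqref{eq:FSF:desforce} into~\eqref{eq:FSFFull:contsys} yields the translational error dynamics $\dot\evec_{\pvec}=\evec_{\vvec}$, $m\dot\evec_{\vvec}=-k_p\evec_{\pvec}-k_v\evec_{\vvec}-\boldsymbol{\xi}$ with the force-direction error $\boldsymbol{\xi}\triangleq\fvec_d-(\fvec_d\cdot\bvec_3)\bvec_3$. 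This $\boldsymbol{\xi}$ is the only channel through which the attitude loop perturbs the translation loop; the converse coupling (the dependence of $\Xbf_d,\omegabf_d,\dot\omegabf_d$ on the translation errors) is wholly contained in the reference that Proposition~\ref{thm:attitudeSUTcont} already tracks for an arbitrary smooth reference under (A1). The first step is the quadratic sandwich on $D$, namely $\|\zvec_p\|^2_{\M_1^{pp}}+\|\zvec_a\|^2_{\M_1^{aa}}\le\Lyap\le\|\zvec_p\|^2_{\M_2^{pp}}+\|\zvec_a\|^2_{\M_2^{aa}}$: the translational half follows from $|\evec_{\pvec}\cdot\evec_{\vvec}|\le\|\evec_{\pvec}\|\,\|\evec_{\vvec}\|$, and the attitude half is exactly as in Proposition~\ref{thm:attitudeSUTcont}, using~\eqref{eq:boundsdistSU2B} to bracket $k_X\Gamma(\Xbf_d,\Xbf)$ between multiples of $\|\evec_{\Xbf}\|^2$.

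Next I would show that $D$ is positively invariant and, in particular, that $\Gamma(\Xbf_d(t),\Xbf(t))\le\phi<2^{-3}$ and $\|\evec_{\pvec}(t)\|\le B_p$ for all $t\ge\tz$, so that Proposition~\ref{thm:attitudeSUTcont} governs the attitude error throughout and (A2)--(A3) give the uniform bound $\|\fvec_d\|\le k_p\|\evec_{\pvec}\|+k_v\|\evec_{\vvec}\|+B_f$. The second inequality defining $D$ is calibrated for the first of these: $\Lyap^a|_{c_a=0}=k_X\Gamma(\Xbf_d,\Xbf)+\tfrac12\evec_{\omegabf}\cdot\J\evec_{\omegabf}$ is non-increasing along the attitude closed loop and at $\tz$ is bounded by $k_X\phi_{\circ}+\tfrac12\maxeig(\J)\|\evec_{\omegabf}(\tz)\|^2\le k_X\phi$, hence $k_X\Gamma(t)\le\Lyap^a|_{c_a=0}(t)\le k_X\phi$ for all $t$; the third inequality makes $\Lyap(\tz)$ small enough that, with $\dot\Lyap\le0$ (established below) and the sandwich, the bound on $\|\evec_{\pvec}\|$ persists by a continuity argument. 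With $\theta$ the angle between $\bvec_{d,3}=\R_d\evec_3$ and $\bvec_3$ one then has $\|\boldsymbol{\xi}\|=\|\fvec_d\|\sin\theta$, and~\eqref{eq:secondobs} supplies $\sin\theta\le 4\|\evec_{\Xbf}\|$ together with $4\|\evec_{\Xbf}\|\le\alpha=2\sqrt{2\phi}<1$ on $D$.

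Differentiating, $\dot\Lyap^a\le-\|\zvec_a\|^2_{\W^{aa}}$ verbatim from Proposition~\ref{thm:attitudeSUTcont}, while a direct computation gives
\begin{equation*}
\dot\Lyap^p=-\tfrac{c_pk_p}{m}\|\evec_{\pvec}\|^2-(k_v-c_p)\|\evec_{\vvec}\|^2-\tfrac{c_pk_v}{m}\,\evec_{\pvec}\cdot\evec_{\vvec}-\bigl(\evec_{\vvec}+\tfrac{c_p}{m}\evec_{\pvec}\bigr)\cdot\boldsymbol{\xi}.
\end{equation*}
Inserting $\|\boldsymbol{\xi}\|\le 4\|\evec_{\Xbf}\|\,(B_f+k_p\|\evec_{\pvec}\|+k_v\|\evec_{\vvec}\|)$: the summands carrying $k_p\|\evec_{\pvec}\|$ or $k_v\|\evec_{\vvec}\|$ are quadratic-or-higher in the translation errors and, via $4\|\evec_{\Xbf}\|\le\alpha$, are absorbed into the $(1\mp\alpha)$ factors of $\W^{pp}$ in~\eqref{eq:FSFFullMatrices}, while the $B_f$ summand together with $k_p\|\evec_{\pvec}\|\le k_pB_p$ produces exactly the cross term $\zvec_p^\top\W^{pa}\zvec_a$ with $\W^{pa}$ as in~\eqref{eq:FSFFullMatrices2} (its second column vanishing because $\boldsymbol{\xi}$ is independent of $\evec_{\omegabf}$). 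Collecting, $\dot\Lyap\le-\zvec_c^\top\W\zvec_c$ with $\zvec_c\triangleq(\zvec_p^\top,\zvec_a^\top)^\top\in\Real^4$ and $\W\triangleq\bigl[\begin{smallmatrix}\W^{pp}&-\tfrac12\W^{pa}\\-\tfrac12(\W^{pa})^\top&\W^{aa}\end{smallmatrix}\bigr]$; lower-bounding $\zvec_c^\top\W\zvec_c$ by $\mineig(\W^{pp})\|\zvec_p\|^2+\mineig(\W^{aa})\|\zvec_a\|^2-\|\W^{pa}\|\,\|\zvec_p\|\,\|\zvec_a\|$ and demanding that this $2\times2$ form in $(\|\zvec_p\|,\|\zvec_a\|)$ be positive definite is precisely $B_z=4\mineig(\W^{aa})\mineig(\W^{pp})-\|\W^{pa}\|^2>0$ from (A5). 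Hence $\dot\Lyap\le-c\,\|(\evec_{\pvec},\evec_{\vvec},\evec_{\Xbf},\evec_{\omegabf})\|^2$ for some $c>0$ on $D$, which closes the invariance bootstrap and, with the sandwich, yields UES on $D$ by~\cite[Theorem~4.10]{khalil2002nonlinear}.

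The main obstacle is not any single estimate but the coupled bookkeeping: term by term in $\|\boldsymbol{\xi}\|$ one must choose which factor to over-bound by a constant ($B_f$, $B_p$, or $\alpha$) and which to retain as a state-dependent error, so that the residual assembles into \emph{exactly} the matrices $\W^{pp},\W^{pa}$ postulated in (A5); and the positive-invariance step is genuinely circular — the bound $\|\evec_{\pvec}\|\le B_p$ is needed for $\dot\Lyap\le0$, which is needed for the bound — so it must be run as a careful maximal-interval/continuity argument that simultaneously keeps $\Gamma(\Xbf_d,\Xbf)<\phi$ so Proposition~\ref{thm:attitudeSUTcont} remains applicable throughout.
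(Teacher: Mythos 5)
Your proposal is correct and follows essentially the same route as the paper's own proof: the same composite Lyapunov function $\Lyap^p+\Lyap^a$, the same identification of the thrust-direction mismatch (your $\boldsymbol{\xi}$ is the paper's $-\bar\fvec$), the same sine-angle bound $\|\boldsymbol{\xi}\|\le 4\|\evec_{\Xbf}\|(B_f+k_p\|\evec_{\pvec}\|+k_v\|\evec_{\vvec}\|)$ via~\eqref{eq:secondobs}, the same assembly into $\W^{pp}$, $\W^{pa}$, $\W^{aa}$, and the same conclusion via~\cite[Theorem~4.10]{khalil2002nonlinear}. Your explicit flagging of the invariance bootstrap (the $\Lyap^a|_{c_a=0}$ sublevel-set argument for $\Gamma\le\phi$ combined with the continuity argument for $\|\evec_{\pvec}\|\le B_p$) is exactly the step the paper treats more tersely.
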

\begin{proof}
\inACC{The proof is given in the extended paper, with a sketch provided here.}\inArxiv{The proof is given in the Appendix, with a sketch provided here.} Similar to \protect{\cite[Proposition 2]{lee2010geometric}}, it follows by defining a Lyapunov function candidate
\inACC{\begin{subequations}
\begin{align}\label{eq:FSF:full:SU2:lyap}
\hspace{-1pt}\bar\Lyap =& \frac{1}{2}k_p\|\evec_{\pvec}\|^2 + \frac{1}{2}m\|\evec_{\vvec}\| ^2 + c_p \evec_{\pvec}\cdot \evec_{\vvec} +\\
&k_X\Gamma(\Xbf_d,\Xbf) + c_a\evec_{\Xbf}\cdot \evec_{\omegabf} + \frac{1}{2}\evec_{\omegabf}\cdot \J\evec_{\omegabf}.\hspace{-1pt}
\end{align}
\end{subequations}}
\inArxiv{\begin{subequations}
\begin{align}\label{eq:FSF:full:SU2:lyap}
\hspace{-1pt}\bar\Lyap =& \frac{1}{2}k_p\|\evec_{\pvec}\|^2 + \frac{1}{2}m\|\evec_{\vvec}\| ^2 + c_p \evec_{\pvec}\cdot \evec_{\vvec} +
k_X\Gamma(\Xbf_d,\Xbf) + c_a\evec_{\Xbf}\cdot \evec_{\omegabf} + \frac{1}{2}\evec_{\omegabf}\cdot \J\evec_{\omegabf}.\hspace{-1pt}
\end{align}
\end{subequations}}
Given the assumptions (A1)-(A5), it is shown all solutions initialized on $D$ remain on this domain for all $t\geq \tz$. Furthermore, it is shown $\bar\Lyap$ is continuously differentiable, and there exist constants $c_1,c_2,c_3>0$ expressed in the matrices in~\eqref{eq:FSFFullMatrices} and~\eqref{eq:FSFFullMatrices2}, such that
\begin{equation}
c_1\|\bar\zvec\|^2\leq \bar\Lyap\leq c_2\|\bar\zvec\|^2,\quad (\der/\der t){\bar\Lyap}\leq -c_3\|\bar\zvec\|^2,
\end{equation}
where $\bar\zvec=(\|\evec_{\pvec}\|;\|\evec_{\vvec}\|;\|\evec_{\Xbf}\|;\|\evec_{\omegabf}\|)$. This holds for all solutions of the error dynamics on $D$. Applying~\protect{\cite[Theorem 10]{khalil2002nonlinear}} shows UES of $\bar\zvec=\Z$ on $D$.
\end{proof}

\inArxiv{\subsection{Comments on the Assumptions}
\subsubsection{Assumption (A1)} The assumption is generally difficult to guarantee, as the denominators in~\eqref{eq:bvec1SO3} and~\eqref{eq:defqd} depend on the control errors, and are defined with respect to the reference trajectory and not the desired reference trajectory. There may exist solutions both initially and transiently where these are ill defined. However, as the full-state information of the UAV is available, such cases can easily be detected and handled in the controller implementation (see, e.g.,~\cite{greiff2018nonsingular}).

\subsubsection{Assumption (A4)}} It may seem as though Assumption (A4) is restrictive, as it only permits small attitude errors. It can be relaxed slightly;  given the characterization of the domain of exponential convergence in Proposition~\ref{thm:attitudeSUTcont}, the following holds.

\begin{proposition}
Consider the system in~\eqref{eq:FSFFull:contsys} in closed-loop feedback with Proposition~\ref{prop:geomSUT}, but instead of Assumption (A4), assume that the initial errors satisfy \label{prop:FSFFull:attractivitySU2}
\begin{subequations}\label{eq:domain2}
\begin{align}
&\Gamma(\Xbf_d(\tz), \Xbf(\tz))\leq \phi<2,\\
&\|\evec_{\omegabf}(\tz)\|^2\leq \frac{2}{\maxeig(\J)}k_X(\phi-\Gamma(\Xbf_d(\tz), \Xbf(\tz))).
\end{align}
\end{subequations}
Under these conditions, the origin $(\evec_{\pvec},\evec_{\vvec},\evec_{\Xbf},\evec_{\omegabf})=(\Z,\Z,\Z,\Z)$ is asymptotically attractive.
\end{proposition}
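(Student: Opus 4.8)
The plan is to exploit a structural property of the closed loop that is not used in Proposition~\ref{prop:geomSUT}. With $\taubf$ taken from Proposition~\ref{thm:attitudeSUTcont} applied to the desired reference, set $\Xbf_e=\Xbf_d\Hr\Xbf$; differentiating with $\dot\Xbf_d=\Xbf_d[\omegabf_d/2]_{\SUT}^{\land}$ and~\eqref{eq:attitude} gives $\dot\Xbf_e=\Xbf_e[\evec_{\omegabf}/2]_{\SUT}^{\land}$, while the feed-forward bracket in~\eqref{eq:thm:SU2:feedback} is exactly $\tfrac{\der}{\der t}[\Xbf_e\Hr[\omegabf_d]_{\SUT}^{\land}\Xbf_e]_{\SUT}^{\lor}$, so that $\J\dot{\evec}_{\omegabf}=-k_X\evec_{\Xbf}-k_\omega\evec_{\omegabf}$. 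Hence the attitude error $(\Xbf_e,\evec_{\omegabf})\in\SUT\times\Real^3$ satisfies an \emph{autonomous} differential equation that is completely decoupled from the translational state, the only requirement being that $(\Xbf_d,\omegabf_d,\dot\omegabf_d)$ be well defined, i.e.\ (A1). The proof then separates into (i) asymptotic attractivity of this autonomous attitude subsystem on the enlarged domain~\eqref{eq:domain2}, and (ii) a cascade argument for the translational error, which feeds nothing back into the attitude loop and is driven only by the (vanishing) attitude error through the direction of the thrust.

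For step (i), use $\Lyap^a|_{c_a=0}=k_X\Gamma(\Xbf_d,\Xbf)+\tfrac12\evec_{\omegabf}\cdot\J\evec_{\omegabf}$, which is a function of $(\Xbf_e,\evec_{\omegabf})$ alone since $\Gamma(\Xbf_d,\Xbf)=\tfrac12\Trace(\I-\Xbf_e)$. With $\tfrac{\der}{\der t}\Gamma(\Xbf_d,\Xbf)=\evec_{\Xbf}\cdot\evec_{\omegabf}$ (as in~\cite{greiff2021similarities}) and the reduced $\evec_{\omegabf}$-dynamics, one obtains $\tfrac{\der}{\der t}\Lyap^a|_{c_a=0}=-k_\omega\|\evec_{\omegabf}\|^2\le0$ everywhere. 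Condition~\eqref{eq:domain2} implies $\Lyap^a|_{c_a=0}(\tz)\le k_X\phi$, so every solution enters the compact, forward-invariant sublevel set $\mathcal{S}=\{\Lyap^a|_{c_a=0}\le k_X\phi\}$, which --- because $\phi<2$ --- excludes the second equilibrium $(\Xbf_e,\evec_{\omegabf})=(-\I,\Z)$, where $\Lyap^a|_{c_a=0}=2k_X$. On $\{\evec_{\omegabf}=\Z\}$ the reduced dynamics force $\dot{\evec}_{\omegabf}=\Z$, hence $\evec_{\Xbf}=\Z$ and $\Xbf_e\in\{\pm\I\}$, so the largest invariant set inside $\mathcal{S}\cap\{\evec_{\omegabf}=\Z\}$ is the single point $(\I,\Z)$; LaSalle's invariance principle --- available because the attitude subsystem is autonomous --- then gives $(\evec_{\Xbf},\evec_{\omegabf})\to(\Z,\Z)$ and, in particular, $\Gamma(\Xbf_d(t),\Xbf(t))\to0$.

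For step (ii), subtract the reference kinematics from $m\dot\vvec=f\R\evec_3-mg\evec_3$ with $f=\fvec_d\cdot\R\evec_3$ and $\fvec_d$ as in~\eqref{eq:FSF:desforce}, obtaining $\dot{\evec}_{\pvec}=\evec_{\vvec}$ and $m\dot{\evec}_{\vvec}=-k_p\evec_{\pvec}-k_v\evec_{\vvec}+\boldsymbol{\delta}(t)$ with $\boldsymbol{\delta}=(\fvec_d\cdot\R\evec_3)\R\evec_3-\fvec_d$ and $\|\boldsymbol{\delta}(t)\|=\|\fvec_d(t)\|\,|\sin\theta(t)|$, where $\theta$ is the angle between $\bvec_3$ and $\bvec_{d,3}$. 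By (A2), $\|\fvec_d\|\le k_p\|\evec_{\pvec}\|+k_v\|\evec_{\vvec}\|+B_f$, and by~\eqref{eq:firstobs} and~\eqref{eq:secondobs}, once $\Gamma(\Xbf_d,\Xbf)<1-1/\sqrt2$ one has $|\sin\theta|^2\le 2^3\Gamma(\Xbf_d,\Xbf)$; with step (i) this shows $g(t):=|\sin\theta(t)|\to0$ while $g(t)\le1$ for all $t$. Since the nominal system $(\dot{\evec}_{\pvec};\dot{\evec}_{\vvec})=\A(\evec_{\pvec};\evec_{\vvec})$ is Hurwitz, a quadratic $V_p=(\evec_{\pvec};\evec_{\vvec})\Tr\mathbf{P}(\evec_{\pvec};\evec_{\vvec})$ with $\A\Tr\mathbf{P}+\mathbf{P}\A=-\I$ gives $\tfrac{\der}{\der t}V_p\le-\|\zvec_p\|^2+c_1g(t)\|\zvec_p\|^2+c_2g(t)\|\zvec_p\|$, and choosing $T$ with $c_1g(t)\le\tfrac12$ for $t\ge T$ yields $\tfrac{\der}{\der t}V_p\le-\tfrac14\|\zvec_p\|^2+c_3g(t)^2$ on $[T,\infty)$. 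Because the right-hand side of the error dynamics has at most linear growth in $(\evec_{\pvec};\evec_{\vvec})$ (using $g\le1$) there is no finite escape on $[\tz,T]$, and the comparison lemma on $[T,\infty)$ with the vanishing input $g(t)^2$ gives $(\evec_{\pvec},\evec_{\vvec})\to(\Z,\Z)$. Combining with step (i), $(\evec_{\pvec},\evec_{\vvec},\evec_{\Xbf},\evec_{\omegabf})\to(\Z,\Z,\Z,\Z)$.

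The main obstacle --- and the reason uniform exponential stability degrades to mere attractivity --- is that~\eqref{eq:domain2} reaches well outside the region where the quadratic Lyapunov estimates underpinning Proposition~\ref{thm:attitudeSUTcont} hold, so one cannot argue exponentially and must instead use the invariance principle; this is only possible because the feed-forward cancellation renders the attitude error subsystem autonomous (for a generic time-varying desired reference a Matrosov-type argument would be needed instead). The secondary technical point is the bookkeeping in step (ii): during the transient the bound on $\boldsymbol{\delta}$ is only linear in $(\evec_{\pvec},\evec_{\vvec})$ with an $O(1)$ time-varying gain, so finite escape must first be ruled out on a bounded interval before the standard vanishing-perturbation estimate closes the loop, which tacitly uses (A1) to keep the desired reference --- and hence $f$, $\omegabf_d$, $\dot\omegabf_d$ --- finite along the solution.
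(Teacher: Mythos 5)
Your proof is correct, but it follows a genuinely different route from the one the paper sketches. The paper (deferring to Lee et al., Appendix E) uses a two-phase argument: bound the solutions on $[\tz,t^*]$ until the error state enters the exponential domain $D$ of Proposition~\ref{prop:geomSUT} at some finite $t^*$, then invoke that proposition. You instead (i) exploit the exact feed-forward cancellation to write the $(\Xbf_e,\evec_{\omegabf})$ error dynamics as an autonomous system, apply LaSalle on the compact forward-invariant sublevel set $\{\Lyap^a|_{c_a=0}\leq k_X\phi\}$ (which excludes $\Xbf_e=-\I$ because $\phi<2$, and which \eqref{eq:domain2} places the initial condition inside), and conclude $(\evec_{\Xbf},\evec_{\omegabf})\to(\Z,\Z)$ --- this is precisely the mechanism behind the AGAS remark the paper makes after the proposition; then (ii) close the loop with a vanishing-perturbation cascade for the translational error. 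The trade-off is instructive: the paper's route inherits an eventually-exponential rate for free but must verify that the \emph{joint} state actually lands in $D$ at $t^*$ --- in particular that $\|\evec_{\pvec}(t^*)\|$ has not outgrown $B_p$ and that the weighted-norm condition in \eqref{eq:thm:FSFSUTcont:domain} holds --- which requires quantitative control of the translational transient that the paper leaves entirely to the cited reference. Your route avoids re-entry into $D$ altogether; since your fresh quadratic $V_p$ does not need $\W^{pp}\succ\Z$, you also sidestep the awkward fact that (A5) read with $\alpha=2\sqrt{2\phi}$ becomes vacuous for $\phi$ near $2$. The price is that your argument directly delivers only asymptotic convergence, with exponential decay recovered a posteriori once the errors are small. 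One caveat you rightly flag but should not understate: the ``decoupling'' of the attitude error subsystem holds only in the error coordinates and only while $(\Xbf_d,\omegabf_d,\dot\omegabf_d)$ remains well defined along the actual closed-loop solution, which is a property of the full interconnection rather than of the attitude loop alone, so (A1) is doing real work in both halves of your argument.
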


\begin{proof}
This becomes completely analogous to the proof in \protect{\cite[Appendix E]{lee2010geometric}}, therefore omitted for brevity. It follows by showing boundedness of solutions on $t\in[\tz,t^*]$, before the errors approach $D$ as defined in~\eqref{eq:thm:FSFSUTcont:domain} at a finite time $t^*$, after which the errors decay exponentially to the origin.
\end{proof}

It is worth noting that  the UAV system with the attitude controller in Proposition~\ref{thm:attitudeSUTcont} can be shown to be almost globally asymptotically stable (AGAS), in the sense that all initial conditions converge to a set $\Gamma(\Xbf_d(\tz), \Xbf(\tz))\in \{0,2\}$ with $\evec_{\omegabf}=\Z$, corresponding to a stable point interior of~\eqref{eq:domain2}, or an unstable point on the boundary of the domain of exponential attraction in~\eqref{eq:domain2}. As such, the solutions associated with \emph{almost all} initial conditions  asymptotically converge to~\eqref{eq:domain2}, with subsequent convergence of the errors to $D$ in~\eqref{eq:thm:FSFSUTcont:domain}.

\inArxiv{\subsubsection{Assumption (A5)}  As pointed out in~\cite{greiff2021similarities}, for small $c_a$, the matrices $\M_1^{aa}, \M_2^{aa}, \W^{aa}$ are positive definite. Similarly, for sufficiently small $c_p$, the matrices $\M_1^{pp}, \M_2^{pp}, \W^{pp}$ are positive definite. Specifically, sufficient conditions are
\begin{align*}
    c_a&<\min\begin{Bmatrix}
4k_{\omega}, 
\dfrac{4k_\omega k_X \mineig(\J)^2}{\maxeig(\J)k_w^2+\mineig(\J)^2k_X},
2\sqrt{k_X \mineig(\J)}
\end{Bmatrix}\\
    c_p&<\min\begin{Bmatrix}
k_{v}(1- \alpha), 
\dfrac{4 m k_p k_v(1 - \alpha)^2}{k_v^2(1 +\alpha)^2+4 m k_p(1-\alpha)},
\sqrt{k_x m}
\end{Bmatrix}
\end{align*}
Similarly, it is clear that the last condition in Assumption (A5) can be satisfied by decreasing $\phi$ and $\alpha$ and/or increasing the tuning parameters $k_X$ and $k_{\omega}$ in relation to the $B_f$ and $B_p$. It should also be noted that it can be replaced by a less restrictive condition $\W^{pp} - \W^{pa}(\W^{aa})^{-1}(\W^{pa})\Tr\succ \Z$.}

\section{THE ESTIMATION PROBLEM}\label{sec:estimation}

\inArxiv{In this section, we present a system for simultaneously localization and mapping (SLAM), which is used to generate real-time pose estimates of the UAV from a monocular video feed. The system is feature based, relying on extracted ORB features \cite{orb} for tracking and matching. Additionally it utilizes accelerometer data and gyroscopic data from an inertial measurement unit (IMU) to constrain the scale and two rotational components, the pitch and the roll. These are otherwise ambiguous if only images are used. This is important since the controller assumes positions in meters and rotations relative to the gravity direction. The system uses the method proposed in \cite{preintegration} to accumulate IMU data into so-called deltas, containing information about the metric relative transformation between consecutive frames. These deltas are then used together with feature matches between images in a large non-linear optimization problem called bundle adjustment~\cite{isam2}, to solve for the camera poses and three-dimensional structure up to a metric solution in $\{\Gcal\}$.


The SLAM system is divided into modules, each performing a specific task concurrently with the others and communicates with the other modules using message passing. The main modules in the system are described in order:
\begin{itemize}
    \item[1] The initialization module (see Section~\ref{sec:SLAM:initialization});
    \item[2] The tracking module (see Section~\ref{sec:SLAM:tracking});
    \item[3] The re-localization module (see Section~\ref{sec:SLAM:relocalization});
    \item[4] The triangulation module (see Section~\ref{sec:SLAM:triangulation});
    \item[5] The mapping module (see Section~\ref{sec:map}).
\end{itemize}

Furthermore, the integration of the SLAM system with the EKF on the Crazyflie is described in Section~\ref{sec:SLAM:integration}.}


\inArxiv{\begin{figure}[t!]
\resizebox{\textwidth}{!}{\begin{tikzpicture}[
module/.style={
  draw=#1,
  rounded corners=8pt,
  line width=1pt,
  align=center,
  text width=3cm,
  minimum height=1cm,
  font=\strut\sffamily
  },
myarr/.style={
  ->,
  >=latex,
  thick
  },
node distance=0.5cm   
]
\newcommand{\relocalizationtext}{\hrule\vspace{2pt}
    \footnotesize (i) Perceptual hashing~\cite{Zauner2010ImplementationAB}\\
    \footnotesize (ii) Brute-force matching\hspace{8pt}$\;$
}
\newcommand{\trackingtext}{\hrule\vspace{2pt}
    \footnotesize (i) ORB feature matching~\cite{orb}\\
    \footnotesize (ii) Form IMU-deltas~\cite{preintegration}\hspace{20pt}$\;$
    \footnotesize (iii) Pose estimation\hspace{40pt}$\;$
}
\newcommand{\triangulationtext}{\hrule\vspace{2pt}
    \footnotesize (i) Depth filters \cite{VOGIATZIS2011434}\hspace{22pt}$\;$
}
\newcommand{\mappingtext}{\hrule\vspace{2pt}
    \hspace*{-3pt}\footnotesize (i) Key-frame insertion/pruning \hspace{-1pt}$\;$\hspace*{-3pt}\\
    \hspace*{-3pt}\footnotesize (ii) Validation of 3D points \hspace{17pt}$\;$\hspace*{-3pt}\\
    \hspace*{-3pt}\footnotesize (iii) Local bundle adjustment \cite{isam2}\hspace*{-3pt}
}

\node[module,text width=4.5cm] (relocalization)
  {3. Re-localization\\\relocalizationtext};
\node[module,text width=4.5cm,below=of relocalization] (tracking)
  {2. Tracking\\\trackingtext};
\node[module,text width=4.5cm,below=of tracking] (triangulation)
  {4. Triangulation\\\triangulationtext};
\node[module,text width=4.5cm,below=of triangulation] (mapping)
  {5. Mapping\\\mappingtext};
\node[yshift=-1.17cm, node distance=1.7cm, module, rotate=90, text width=5.05cm, left=of relocalization] (initialization)
  {1. Initialization};

\node[draw,dashed,rounded corners=8pt,inner sep=8pt,fit={(relocalization) (tracking) (triangulation) (mapping) (initialization)}]
  (fit) {};

\node[node distance=6cm, module,right=of tracking, text width=3.75cm,text width=2cm] (mekf)
  {EKF};

\node[module,text width=3.75cm, above=of mekf, xshift=0.85cm] (refgen)
  {Reference generator};
\node[module,below=of mekf,xshift=0.85cm,text width=3.75cm] (controller)
  {Proposition~\ref{prop:geomSUT}};
\node[module,below=of controller,text width=2cm,xshift=-0.85cm] (camera)
  {On-board\\camera};
\node[module,right=of camera,text width=1cm] (motorcontrol)
  {Motor\\control};
\node[module,left=of camera,text width=1cm] (imu)
  {IMU};

\node[draw,dashed,rounded corners=8pt, inner sep=8pt,fit={(refgen) (mekf) (controller) (imu) (motorcontrol)}]
  (fit) {};

\draw[myarr] (mekf.south) -- node[pos=0.5, left] {$\xvec$} ([yshift=-0.5cm]mekf.south);
\draw[myarr] ([yshift=0.5cm]motorcontrol.north) -- node[pos=0.5, left] {$\uvec$} (motorcontrol.north);

\draw[myarr] ([xshift=1.5cm]refgen.south) -- node[pos=0.88, left] {$(\xvec_r,\uvec_r)$} ([xshift=1.5cm]controller.north);

\draw[myarr, black] ([yshift=5pt]tracking.east) -- node[pos=0.41, above] {\small Pose estimates} ([yshift=5pt]mekf.west);
\draw[myarr, blue] (camera.south) -- ++(0,-2cm) -| node[pos=0.06, above] {\small Video feed} (initialization.west);

\draw[myarr, blue] ([yshift=-2cm,xshift=-6cm]camera.south) |- (relocalization.east);
\draw[myarr, blue] ([yshift=-2cm,xshift=-6cm]camera.south) |- ([yshift=-5pt]tracking.east);

\draw[myarr, gray] (imu.north) |- node[pos=0.35, above] {} (mekf.west);
\draw[myarr, gray] (imu.north) |- node[rotate=90,pos=0.24, above] {\small IMU data} (tracking.east);

\draw[myarr, black] ([xshift=2.5pt]relocalization.south) -- node[xshift=-7pt,pos=0.5, left] {\small Bad pose} ([xshift=2.5pt]tracking.north);
\draw[myarr, black] ([xshift=-2.5pt]tracking.north) -- node[xshift=5pt, pos=0.5, right] {\small Re-localized pose} ([xshift=-2.5pt]relocalization.south);

\draw[myarr, black] ([xshift=2.5pt]tracking.south) -- node[pos=0.5, right] {\small Pose Frames} ([xshift=2.5pt]triangulation.north);

\draw[myarr, blue] (tracking.west) --  ++(-0.7cm,0)   |- node[above, rotate=90, pos=0.24, right, yshift=0.25cm] {\small Pose Frames} (mapping.west);

\draw[myarr, black] ([xshift=2.5pt]triangulation.south) -- node[pos=0.5, right] {\small 3D point proposals} ([xshift=2.5pt]mapping.north);

\draw[myarr, gray] ([yshift=5pt]mapping.west) -- ++(-0.5cm,0)  |- node [below, pos=0.37, rotate=90]{\small Map} ([yshift=-5pt]tracking.west);
\draw[myarr, gray] ([yshift=2.5pt,xshift=-0.5cm]triangulation.west) --  ([yshift=2.5pt]triangulation.west);

\draw[myarr, black] ([yshift=-5pt, xshift=-1.2cm]mapping.west)  -- ([yshift=-5pt]mapping.west);
\draw[myarr, black] ([yshift=-2.5pt, xshift=-1.2cm]triangulation.west)  -- ([yshift=-2.5pt]triangulation.west);
\draw[myarr, black] ([yshift=5pt, xshift=-1.2cm]tracking.west)  -- ([yshift=5pt]tracking.west);
\end{tikzpicture}}
\caption{Holistic view of the control system modules with the main information flow. The SLAM system (left) is run on an external computer, and the modules implemented in the Crazyflie firmware (right) are run on its ARM processor. The communication between the dashed boxes is facilitated by two independent radios.}
\label{fig:systemoverview}
\end{figure}}

\subsection{Initialization}\label{sec:SLAM:initialization}

The system has three phases: the initialization phase, the non-metric phase, and the metric phase (the metric phase being the operational phase). In the first phase an initial two-frame solution is found by selecting a reference frame and then as new frames arrive, estimating an essential matrix between the frames using the five point solver in \cite{five_point_solver} and the RANSAC framework in \cite{ransac}. From the essential matrix the pose and 3D structure is extracted, and if the set of 3D points that satisfy a re-projection threshold is sufficiently large and the median depth is sufficiently low (indicating adequate parallax), the solution is accepted and is then sent to the mapping module. This module in turn bundles the solution and sends it to the tracking module. If the initialization fails at any step the solution is rejected and the next frame is tried. If it fails too many times a new reference frame is selected.

This moves the system to the non-metric phase where the solution is defined up to similarity transformation, and no IMU data is used to constrain the scale and rotation of the system. Once the system has collected five keyframes (to be defined in Section \ref{sec:map}), the solution is upgraded to a metric solution using the method in \cite{map_reuse}, by solving for the gravity direction and scale. Now the system enters the operational phase, where the UAV can use the estimated pose.

\subsection{Tracking}\label{sec:SLAM:tracking}

The tracking module tracks the pose of consecutive frames relative to the current map. It uses the previous frame (and the IMU data when in the metric phase), to generate a proposition of the pose of the current frame. By using the 3D points seen in the previous frame, a new set of potentially visible 3D points are selected from the map that are co-visible with the previous points. These are projected into the current frame to facilitate a guided search for feature matches. The pose is then optimized using the matched features and in the case of metric phase, together with the past 3 poses and their corresponding IMU deltas. This process is repeated with the optimized pose and a smaller search window in the guided search. Next, matches with large re-projection error are discarded, and if the number of matches are sufficiently large and the pose is consistent with the IMU data, the pose is sent to the other modules. \inArxiv{If the tracking fails, a re-localization request is sent to the re-localization module, which then returns with the pose of the current frame if possible.}

\subsection{Re-localization}\label{sec:SLAM:relocalization}

The re-localization module uses perceptual hashing in \cite{Zauner2010ImplementationAB} to turn images into hashes, where similar images have similar hashes. When performing re-localization of a query frame, the frame is converted to a hash. The hash can then efficiently be compared to the hashes of the keyframes in the map to find frames that have observed the same view. The most similar frames are then selected as candidates. For each such frame, the observed 3D points are matched to the query frame features using brute-force matching together with a three point pose solver in \cite{pose_ransac} and RANSAC in \cite{ransac}. Any camera pose with a sufficient amount of inlier matches is then sent back to the tracking module as a candidate pose.

\subsection{Triangulation}\label{sec:SLAM:triangulation}

To triangulate new 3D points and extend the current map, a two-step approach is used. Firstly, for each keyframe, the inverse depth filters in \cite{VOGIATZIS2011434} are used to features that have not already been associated with 3D points. The filter maintains a Gaussian distribution over inverse depth of the 3D point from the keyframe's point of view, as well as a beta distribution over measurement update inlier ratio. For each new pose frame sent from the tracker, the co-visible filters are updated. If the inlier ratio of a filter becomes lower than a threshold, it is restarted. If the uncertainty of the inverse depth becomes sufficiently low, the inverse depth is considered to have converged and is sent from the triangulation module to the mapping module. In the next step, the inverse depth is converted to a 3D point distribution using the inverse depth uncertainty from the filter and an assumed pixel noise of one pixel. This distribution is then projected into co-visible keyframes and features close to the projection, with a sufficiently low descriptor error, are considered to be potential matches. In the next step, a two-point DLT based triangulation RANSAC procedure in~\cite{hartley1997triangulation} is used to find the 3D point with the largest inlier set. If this 3D point has been seen by four keyframes, then it is accepted and is given a life of 5. Each time the 3D point is seen by the tracker, the life is increased. Each time it is predicted but not seen, its life is decreased. If the life reaches zero, the 3D point is removed.


\subsection{Mapping}\label{sec:map}

The mapping module is responsible of updating, optimizing and pruning the map, as well as sending the changes to the rest of the modules.  Using all of the frames in the video would quickly make the optimization problem prohibitively large and additionally many of the frames would be near identical and unable to provide much information. Therefore, a small subset is selected, here called keyframes, that accurately represent the solution. The map object contains the 3D points, the IMU-deltas, and the keyframes.

Keyframes are added in real-time as the map expands, and must satisfy certain conditions to be added: (i) compared to the closest keyframe, the camera must have moved at least 2.5\% of the current mean depth; (ii) the mutual overlap with the most overlapping keyframe must be less than 90 \%, or the average uncertainty of the projection of the 3D points must be higher than four pixels; and (iii) the number of tracked points must be larger than ten. This leads to a generous keyframe insertion policy, quickly expanding the 3D point set, but introducing redundancy in the keyframes.

A redundant keyframe can be characterized as observing the same 3D points as many of the other keyframes. To detect and remove redundant keyframes, we keep track of the amount of 3D points each keyframe observes that have also been observed by at least seven other keyframes. If the fraction of these 3D points is higher than 90\%, the keyframe is marked as redundant. From the set of redundant keyframes, the frame that is the closest to another keyframe is removed.

Every time a new keyframe is added, a local bundle adjustment is performed where keyframes and 3D points that are co-visible with the new frame are adjusted and the remaining are kept fixed. This is done to prevent the map from diverging, and only a local portion of the map is optimized to keep the computational time bounded. Whenever the map has been updated, either by adding a new keyframe, by adding a new 3D point, or by performing a pruning, the difference between the previous map and the updated map is extracted and is then sent to the rest of the modules.

\subsection{Integration}\label{sec:SLAM:integration}

To fuse SLAM estimates with the IMU-data and generate a full state estimate of the UAV given the dynamics in~\eqref{eq:FSFFull:contsys}, we consider the IMU-driven multiplicative extended Kalman filter (EKF) proposed in~\cite{mueller2015fusing}, with a first order attitude reset in~\cite{MuellerCovariance2016}. The filter assumes a non-linear UAV model corresponding to the dynamics in~\eqref{eq:FSFFull:contsys}, but with the velocities $\vvec$ expressed in $\{\Bcal\}$, and the attitude parameterized as a first-order attitude error, $\deltabf\in\Real^3$. This attitude error relates to the estimate $\hat\R(t)\in\SOT$ as $\hat\R(t)=\hat\R(t_k)(\I+\Sbf(\deltabf(t)))$, and is reset to zero when it exceeds a predefined threshold with $t_k$ denoting  the most recent  reset. As such, the state of the EKF is given by $\zvec = (\pvec;\;\vvec;\;\deltabf)\in\Real^9$, and the dynamics in~\eqref{eq:FSFFull:contsys} are expressed in $\zvec$. This estimate is subsequently externalized into the full state of the UAV, $\xvec=(\pvec, \vvec, \Xbf, \omegabf)$, where the attitude rates are computed by averaging the gyroscopic measurements between each EKF prediction. In the experiments, a scalar update version of the filter is used (see, e.g.,~\protect{\cite[Chapter 6]{greiff2017modelling}}), incorporating the measurements consecutively as they arrive. For additional details on the filter implementation, refer to~\cite{mueller2015fusing,MuellerCovariance2016,crazyflie2021mekf}.

\inArxiv{The IMU-data from the Crazyflie are transferred to a PC using a version of the robot operating system (ROS) driver in~\cite{crazyflieROS}, and the video is streamed over a different radio and processed directly by the SLAM system. This stack subsequently outputs an estimate of the UAV pose at 50 [Hz], which is communicated back to the UAV via radio through the ROS driver. The positional part of the estimate is queued into the EKF, which fuses the positional information with the IMU measurements at a rate of 100 [Hz]. The reference trajectory is also communicated to the UAV via ROS as a set of splines in the flat output space of the UAV, and the expansion of this trajectory into the states of the reference dynamics in~\eqref{eq:FSFFull:contsys} and the controller in Proposition~\ref{prop:geomSUT}, both run at a rate of 500 [Hz] (see Figure~\ref{fig:systemoverview}).}

\section{NUMERICAL EXAMPLES AND EXPERIMENTS}\label{sec:numerical}
In this section, we start by giving a simulation example in Section~\ref{sec:simexample} demonstrating the tracking properties of the controller in Proposition~\ref{prop:geomSUT} for a circular maneuver with large initial errors. We then present an experiment in Section~\ref{sec:rtexample} where the controller is integrated with the SLAM system, while also removing the desired attitude accelerations (by letting $\dot{\omegabf}_d\triangleq 0$) to avoid differentiating the control errors.

\subsection{Simulation Example}\label{sec:simexample}
As the proposed controller is asymptotically attractive and ULES for all parameters $m>0,\J=\J\Tr\succ\Z$ and all initial errors, we take $m=0.1$ with $g=10$, and sample a random inertia matrix (here chosen such that $\mineig(\J)=0.05$ and $\maxeig(\J)=0.1$). The initial conditions of the system are sampled from $\vvec(\tz),\omegabf(\tz)\sim \Ncal(\Z,5\I)$, with $\pvec(\tz)\sim\Ncal((0,0,-2)\Tr,\I)$, and $\R(\tz)\sim\Ucal(\SOT)$. From this random initial state, the system is controlled along
\begin{equation*}
\pvec_r(t)=(3\sin(t),3\cos(t),0)\Tr, \quad \bvec_{1r}(t)=\vvec_{r}(t)/\|\vvec_{r}(t)\|.
\end{equation*}
This trajectory can be expanded into the full state trajectory of the UAV in~\eqref{eq:FSFFull:contsys} using the property of differential flatness (see, e.g.,~\protect{\cite[Chapter 3]{greiff2017modelling}}). In one particular realization,
\inACC{\begin{equation*}
\R(\tz) \hspace{-2pt}=\hspace{-2pt}
\begin{bmatrix}
 0.51  \hspace{-2pt}&\hspace{-2pt} -0.05  \hspace{-2pt}&\hspace{-2pt} -0.86\\
   -0.78 \hspace{-2pt}&\hspace{-2pt}   0.41  \hspace{-2pt}&\hspace{-2pt} -0.48\\
    0.37 \hspace{-2pt}&\hspace{-2pt}   0.91 \hspace{-2pt}&\hspace{-2pt}   0.17\\
\end{bmatrix},\;\;
\J \hspace{-2pt}=
\begin{bmatrix}
    0.08 \hspace{-2pt}&\hspace{-2pt}   0.01 \hspace{-2pt}&\hspace{-2pt}   0.02\\
    0.01 \hspace{-2pt}&\hspace{-2pt}   0.07 \hspace{-2pt}&\hspace{-2pt}   0.01\\
    0.02 \hspace{-2pt}&\hspace{-2pt}   0.01 \hspace{-2pt}&\hspace{-2pt}   0.07
\end{bmatrix}\hspace{-2pt},
\end{equation*}
\begin{equation}\label{eq:initialsim}
\hspace{-1pt}\pvec(\tz) \hspace{-2pt}=\hspace{-2pt}
\begin{bmatrix}
    0.08\\
   -0.16\\
   -1.63
\end{bmatrix}\hspace{-2pt},\;
\vvec(\tz) \hspace{-2pt}=\hspace{-2pt}
\begin{bmatrix}
   -0.59\\
    0.76\\
   -0.95
\end{bmatrix}\hspace{-2pt},\;
\omegabf(\tz) \hspace{-2pt}=\hspace{-2pt}
\begin{bmatrix}
   -1.81\\
    1.80\\
    2.81
\end{bmatrix}\hspace{-4pt}.\hspace{-7pt}
\end{equation}}
\inArxiv{\begin{align}
\R(\tz) &=
\begin{bmatrix}
 0.51  & -0.05  & -0.86\\
   -0.78 &   0.41  & -0.48\\
    0.37 &   0.91 &   0.17\\
\end{bmatrix},\;\;
\J =
\begin{bmatrix}
    0.08 &   0.01 &   0.02\\
    0.01 &   0.07 &   0.01\\
    0.02 &   0.01 &   0.07
\end{bmatrix},\\
\pvec(\tz)& =
\begin{bmatrix}
    0.08\\
   -0.16\\
   -1.63
\end{bmatrix},\;
\vvec(\tz) =
\begin{bmatrix}
   -0.59\\
    0.76\\
   -0.95
\end{bmatrix},\;
\omegabf(\tz) =
\begin{bmatrix}
   -1.81\\
    1.80\\
    2.81
\end{bmatrix}.\label{eq:initialsim}
\end{align}}

For this realization of the UAV parameters and the initial errors, the control signals, reference trajectory and system response are depicted in~Fig.~\ref{fig:ACC_sim_A}. Here, we note that the Lyapunov function in~\eqref{eq:FSF:full:SU2:lyap} quickly decays to a small value (here shown in the 10-logarithm), and that it satisfies the associated quadratic bounds at all times. This holds despite the system being initialized outside of $D$. As such, we here rely on the (almost global) asymptotic attractiveness properties in Proposition~\eqref{prop:FSFFull:attractivitySU2} before reaching the domain of exponential attraction. The system configurations are depicted in time in Fig.~\ref{fig:ACC_sim_B}\inArxiv{, with color coding of $\{\Bcal\}$ corresponding to Fig.~\ref{fig:geometry}}. Given that a new set of initial conditions and system parameters can be sampled, similar convergence properties were verified in a total of $10^3$ realizations of the parameters and initial errors in~\eqref{eq:initialsim}. 

\begin{figure}[h!]
    \centering
    \if\useieeelayout1
    \includegraphics[width=\columnwidth]{figures/ACC_sim_A.png}\vspace{-10pt}
    \else
    \includegraphics[width=0.9\textwidth]{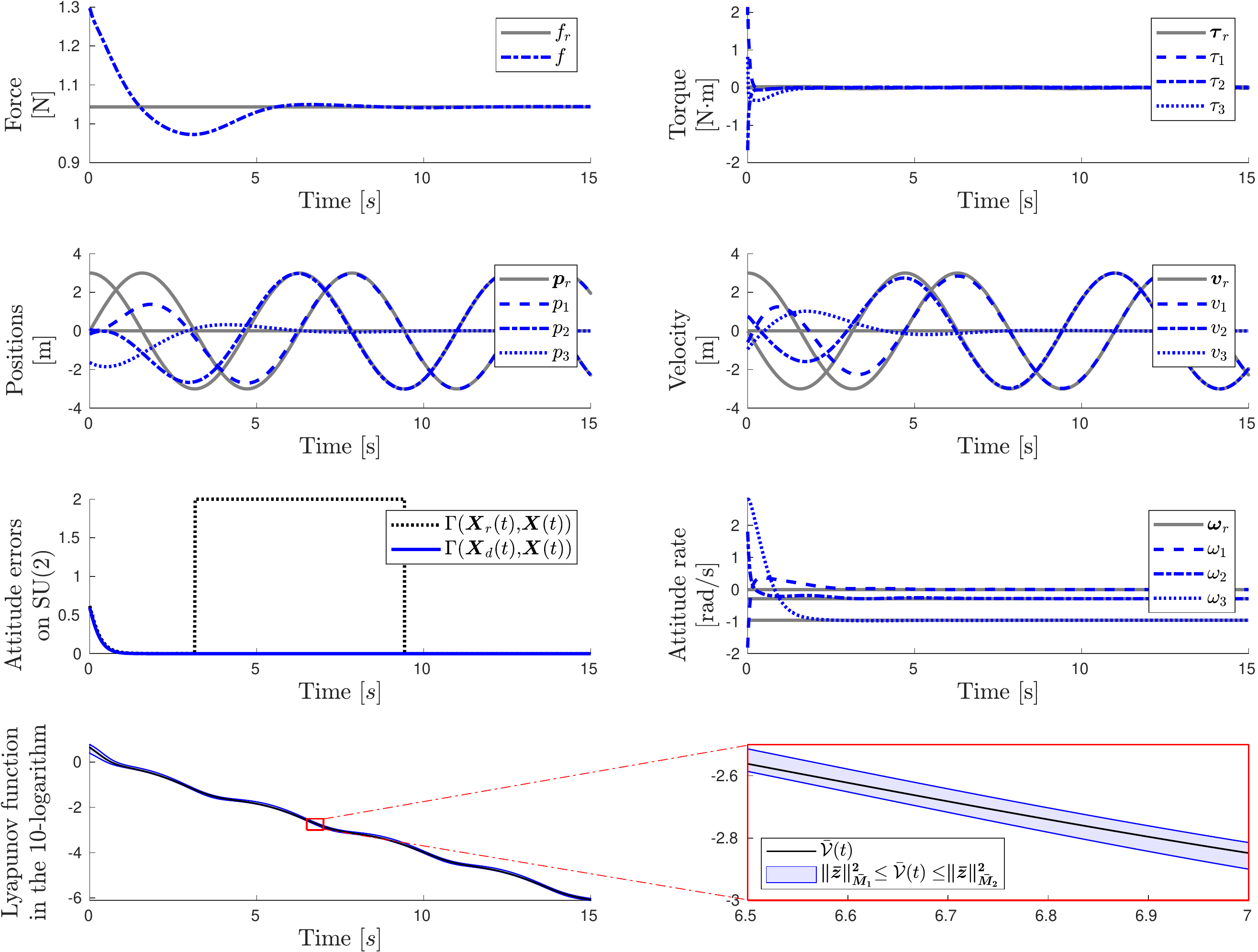}
    \fi
    \caption{System response and control signals in time with references in gray and signals of the controlled system in blue. \textit{Top, left}: Reference- and  controlled force.
    \textit{Top, right}: Reference- and controlled torque.
    \textit{Top center, left}: Reference- and controlled position.
    \textit{Top center, right}: Reference- and controlled velocity.
    \textit{Bottom center, left}: Distance to the reference attitude (black) and desired attitude (blue).
    \textit{Bottom center, right}: Reference- and controlled attitude rate.
    \textit{Bottom}: The Lyapunov function depicted with the associated quadratic bounds in blue, depicted over $t\in[0,15]$ to the left, with a zoom indicated in red on $t\in[6.5,7]$ to the right.
    }
    \label{fig:ACC_sim_A}
\end{figure}

\begin{figure}[h!]
    \centering
    \if\useieeelayout1
    \includegraphics[width=\columnwidth]{figures/ACC_sim_B.png}\vspace{-10pt}
    \else
    \includegraphics[width=0.9\textwidth]{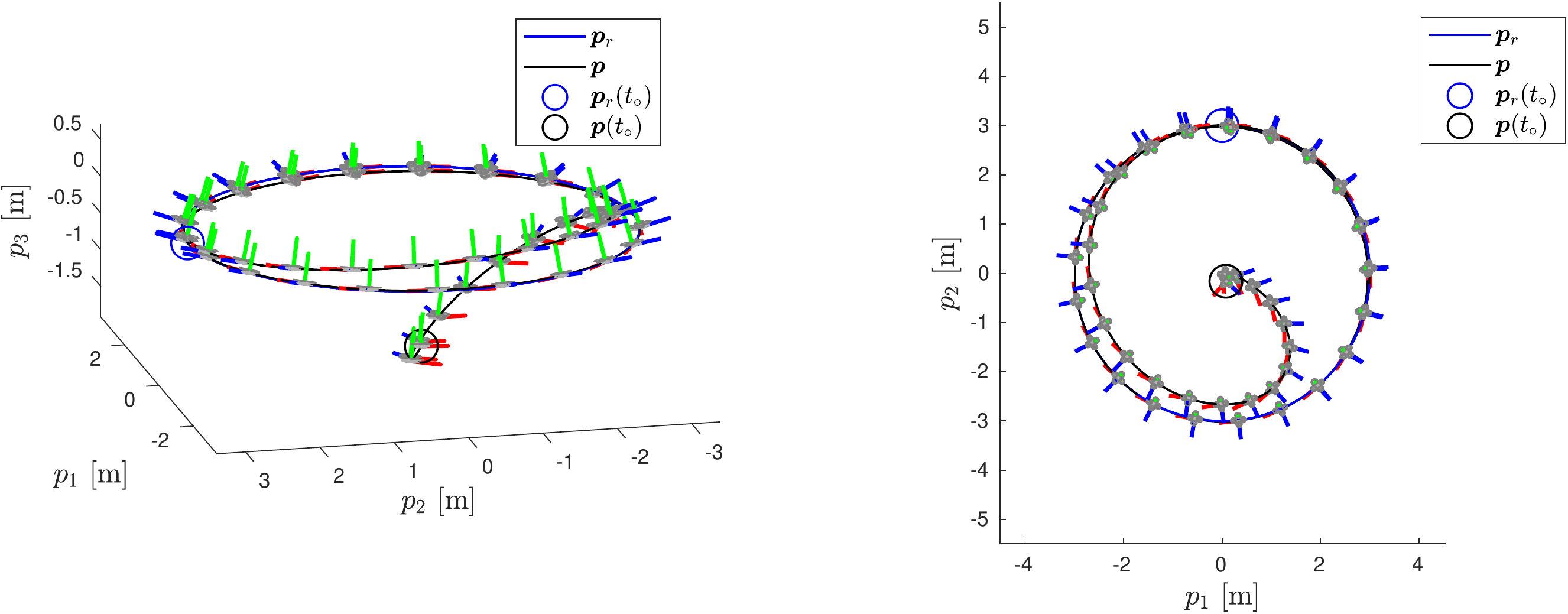}
    \fi
    \caption{Configurations of the UAV $(\pvec(t),\Xbf(t))\in\Real^3\times \SUT$ when controlled \inACC{along the trajectory} using Proposition~\ref{prop:geomSUT}.}
    \label{fig:ACC_sim_B}
\end{figure}

\subsection{Real-Time Example}\label{sec:rtexample}
In the second example, an inventorying experiment is conducted in real-time with a Crazyflie 2.0~\cite{crazyflie2021twopointo} with respect to a set of shelves in the coroner of a room \inACC{(see Fig.~\ref{fig:dronelab})}\inArxiv{(see Fig.~\ref{fig:geometry})}. The SLAM system is integrated with the stock multiplicative EKF of the Crazyflie~\cite{crazyflie2021mekf} as described in Section~\ref{sec:estimation}, using the positional estimates from the SLAM system in combination with the IMU-measurements to generate a full-state estimate. These estimates are subsequently used by the controller in Proposition~\ref{prop:geomSUT} to actuate the UAV, but employing the discontinuous version of the attitude controller in~\cite{greiff2021similarities}.

\inACC{\begin{figure}[h!]
    \centering
    \if\useieeelayout1
    \includegraphics[width=\columnwidth]{figures/dronelab.pdf}\vspace{-10pt}
    \else
    \includegraphics[width=0.7\textwidth]{figures/dronelab.pdf}
    \fi
    \caption{Shelves with drinks are to be scanned, with the camera facing the target shelf at all times. The origin of $\{\Gcal\}$ is defined on a piece of paper.}
    \label{fig:dronelab}
\end{figure}}

The shelf geometry is assumed to be known, and a reference trajectory is planned consisting of linear splines such that the UAV traverses each segment at a velocity of $\|\vvec_r(t)\|=1$ [m/s] approximately 0.4 [m] from the shelves. In order to keep the shelves within camera view, this implies performing a turn at $t\in[24.6,26.3]$, also defined by linear splines. As such, the reference trajectory cannot be followed perfectly at the spline end-points, where the velocities are discontinuous, resulting in slight transients in the errors. The tracking errors are shown in Fig.~\ref{fig:ACC_rtexp}, with the expanded reference trajectory and positional estimates logged from the UAV, and the attitude error computed between the estimated rotation in the EKF and the rotation in the SLAM system (which is not incorporated in the EKF). To highlight that the attitude error does not correspond to the distance between the estimate of the EKF and the desired attitude (i.e., the control error $\Xbf_d\Hr\Xbf$ in the UAV), the attitude error is depicted in the $\Psi$-distance, with $\R_r$ sampled from the UAV reference trajectory and $\R$ estimated from the SLAM system.

\begin{figure}
    \centering
    \if\useieeelayout1
    \includegraphics[width=\columnwidth]{figures/ACC_RT.png}\vspace{-10pt}
    \else
    \includegraphics[width=0.9\textwidth]{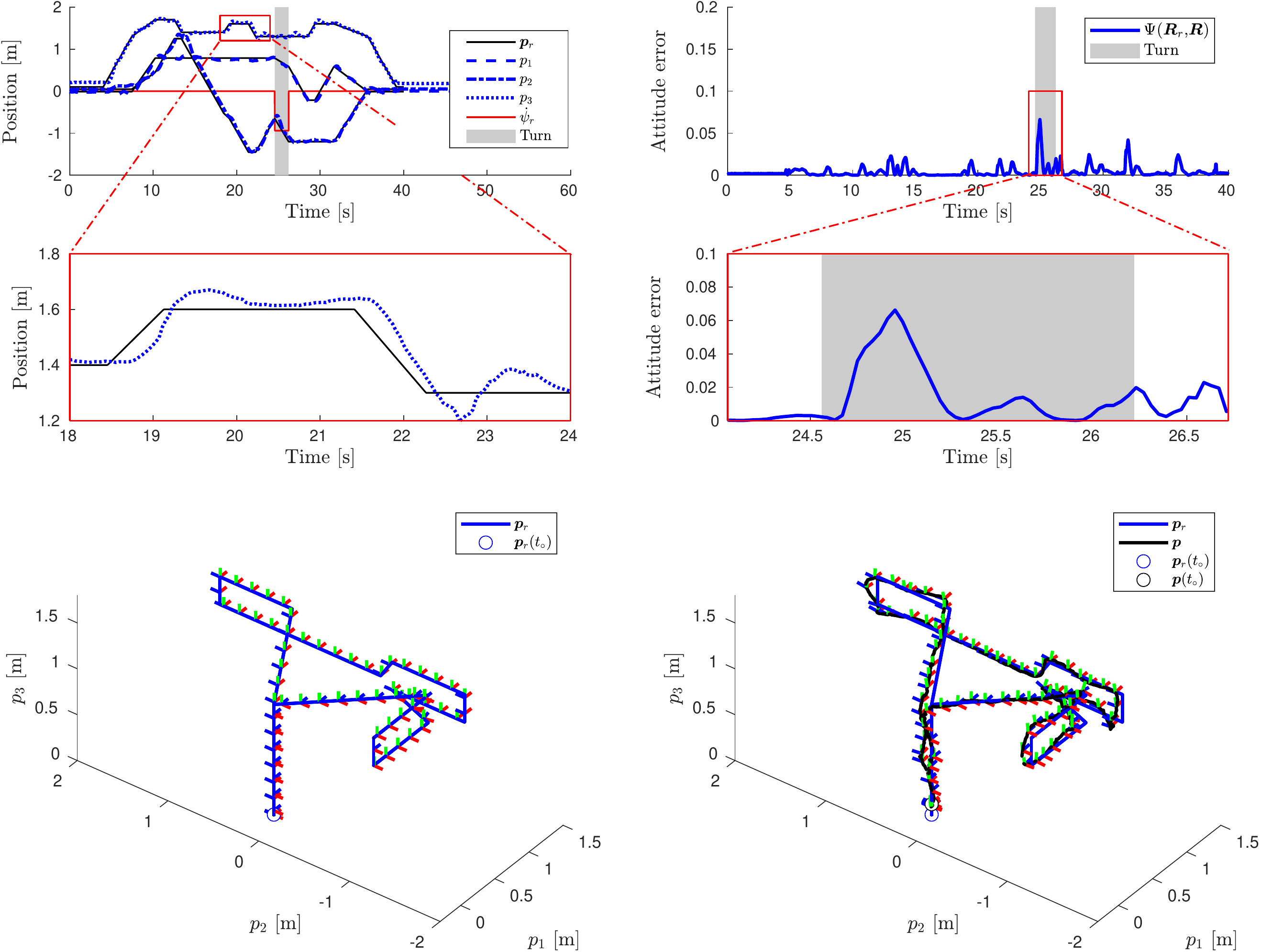}
    \fi
\caption{\textit{Top, left:} The positional reference $\pvec_r$ (black) and the response $\pvec$ (blue), along with the time-derivative of the yaw angle reference (red) indicating the time-interval during which the system is turning (gray). \textit{Top, right:} The tracking attitude error in the distance on $\SOT$, slightly increasing during the turn, but otherwise small. \textit{Center, left:} Zoom on the positional response in the elevation just before the turn, showing slight delays in the response and an overshoot.
\textit{Center, right:} Zoom on the attitude error during the turn.
\textit{Bottom, left:} Reference configurations trajectory in $\Real^3$ in time.
\textit{Bottom, left:} System response in $\Real^3$ in time, showing the reference trajectory, positional response, and measured rotation. }
    \label{fig:ACC_rtexp}
\end{figure}

The UAV successfully scans the shelves in rapid succession, following the reference trajectory down to the expected tracking errors induced by a lack of continuity when switching between splines. We also note that the attitude error is relatively small throughout the experiment, with slight increases when switching between splines, and a larger error during the start of the turning maneuver. Again, we emphasize that this is the attitude error between the reference trajectory and the rotation estimated in the SLAM system, which is not explicitly used in the controller. Finally, to get a sense of the accuracy of the slam system, the initial and terminal configuration of the UAV is depicted in Fig.~\ref{fig:UAVconfigurations}. This demonstrates that without any external motion capture, the controlled system navigates back to a point that differs from the initial position by a few centimeters, despite facing a different wall while landing. Supporting videos of the simulation as well as the experiment are published in~\cite{videoInventorying}.

\section{CONCLUSIONS}\label{sec:conclusion}
In this paper, we have presented a geometric tracking controller for quadrotor UAVs configured on $\SUT\times\Real^3$, that is analogous to the geometric tracking controller on $\SOT\times \Real^3$ in~\cite{lee2010geometric} -- yet distinctly different in ways that have meaningful consequences. In addition, a SLAM system was implemented based on ORB features to process monocular video to a sequence of a pose estimates that were subsequently fused with IMU-data in an on-board EKF. The controller in Proposition~\ref{prop:geomSUT} was demonstrated in simulation, before being applied in practice to a real-time inventorying scenario. We emphasize that the proposed control system is capable of actuating the UAV along the desired reference trajectory without any external motion capture system. As such, a UAV equipped with this control system is easily deployed and is a low-cost alternative for supermarket inventorying.

\bibliography{references}{}
\bibliographystyle{IEEEtran}

\inArxiv{
\cleardoublepage\appendix

\section*{Appendix}
\begin{proof}[Proof of Proposition~\ref{prop:geomSUT}]
This proof is strikingly similar to the proof for the controller on $\SOT\times \Real^3$ in~\protect{\cite[Appendix D]{lee2010geometric}}, and a sketch is given here with some key intermediary expressions. Define a Lyapunov function comprised of two parts, one relating the attitude subsystem, $\Lyap_a$, and one relating to the translation subsystem, $\Lyap_p$. The former is defined as in~\eqref{eq:Lyapufunccand}, and the latter is defined analogously, as
\begin{subequations}
\begin{align}
\Lyap^a&\defeq k_X\Gamma(\Xbf_d, \Xbf) + c_a\evec_{\omegabf}\cdot \evec_{\Xbf} + \frac{1}{2}\evec_{\omegabf}\cdot \J\evec_{\omegabf},\\
\Lyap^p&\defeq  \frac{1}{2}k_p\|\evec_{\pvec}\|^2 + \frac{1}{2}m\|\evec_{\vvec}\| ^2 + c_p \evec_{\pvec}\cdot \evec_{\vvec},
\end{align}
\end{subequations}
respectively. As the attitude dynamics are actuated along this trajectory by Proposition~\ref{thm:attitudeSUTcont}, consider initial attitude errors on
\begin{align}
(\Xbf_e(\tz),\evec_{\omegabf}(\tz))&\in \{(\Xbf_e,\evec_{\omegabf})\in\Lcal_{\phi}\times \Real^3\;|\;\Lyap^a|_{c_a=0}\leq k_X\phi\},
\end{align}
where
\begin{align}
\Lcal_{\phi} &= \{\Xbf_d\Hr\Xbf\in\SUT\;|\;\Gamma(\Xbf_d, \Xbf)\leq \phi<2\}.
\end{align}
From the result in Proposition~\ref{thm:attitudeSUTcont}, it follows that $\Xbf_e(t)\in\Lcal_{\phi}$ for all $t\geq \tz$ if $\Lyap^a(\tz)|_{c_a=0}\leq k_X\phi$, and that the errors converge exponentially to $(\Xbf_e,\evec_{\omegabf})=(\I,\Z)$. As such, the main idea of the proof is to conduct the stability analysis on a domain $D=\{(\evec_{\pvec},\evec_{\vvec},\Xbf_e,\evec_{\omegabf})\in \Real^3\times\Real^3\times\Lcal_{\phi}\times\Real^3\;|\;\|\evec_{\pvec}\|\leq B_{p}\}$, restricting the domain by making $\phi$ and $B_p$ sufficiently small   such that all solutions remain on $D$.

In the following, we consider consider performing the stability analysis on the domain characterized by $D$ with $\phi = 2^{-3}$ with a Lyapunov function candidate $\Lyap = \Lyap^a + \Lyap^p$.

\subsubsection*{Translation error dynamics}
by plugging in the proposed feedback law, the translation error dynamics can be written
\begin{equation}\label{eq:FSFFullerrdynfirst}
m\dot\evec_{\vvec} = m\ddot\pvec - m\ddot\pvec_r = -mg\evec_3 + f\R\evec_3 - 
 m\ddot\pvec_r.
\end{equation}
Note that $\evec_3\R_d\Tr\R\evec_3=\bvec_{d3}\cdot \bvec_3>0$ if we can ensure that~\eqref{eq:firstobs} holds. If so, the term $\tfrac{f}{\evec_3\R_d\Tr\R\evec_3}\R_d\evec_3$ is well defined, and as such,~\eqref{eq:FSFFullerrdynfirst} can be expressed
\begin{align}\label{eq:FSFFull:transdyn}
m\dot\evec_{\vvec}
& = -mg\evec_3- 
 m\ddot\pvec_r  + \frac{f}{\evec_3\R_d\Tr\R\evec_3}\R_d\evec_3 + \bar\fvec,
\end{align}
where
\begin{equation}
\bar\fvec = \frac{f}{\evec_3\R_d\Tr\R\evec_3}\R_d\evec_3 \Big((\evec_3\R_d\Tr\R\evec_3)\R\evec_3 - \R_d\evec_3\Big).
\end{equation}
In addition, recall that the actuating force is computed by
\begin{equation}
\fvec_d = -k_p\evec_{\pvec}-k_v\evec_{\vvec}+mg\evec_3 + m\ddot\pvec_r.
\end{equation}
As $\bvec_{d3}=\fvec_d\|\fvec_d\|^{-1}$, $\fvec_d=\|\fvec_d\|\bvec_{d3} = \|\fvec_d\|\R_d\evec_3$, and
\begin{align}\label{eq:dotsimplificaiton}
\frac{f}{\evec_3\R_d\Tr\R\evec_3}\R_d\evec_3 = 
\frac{\fvec_d\cdot \R\evec_3}{\evec_3\R_d\Tr\R\evec_3}\R_d\evec_3= 
\fvec_d.
\end{align}
Insertion of this expression in~\eqref{eq:FSFFull:transdyn} yields
\begin{align}
m\dot\evec_{\vvec}
 = -k_p\evec_{\pvec}-k_v\evec_{\vvec} + \bar\fvec.
\end{align}
To proceed, we start by bounding $\bar\fvec$ in the control errors using the assumptions of the theorem, but first, we derive the expression for the time-derivative of the part of the Lyapunov function associated with the translation errors.

\subsubsection*{Translation Lyapunov function candidate}
Differentiation of the Lyapunov function associated with the translation dynamics along the solutions of the controlled system yields
\begin{align}\label{eq:FSFFull:lyapderiv}
\dot\Lyap^p 
=& -(k_v-c_p)\|\evec_{\vvec}\|^2 - \frac{c_pk_p}{m}\|\evec_{\pvec}\|^2  - \frac{c_pk_v}{m}(\evec_{\pvec}\cdot \evec_{\vvec})
+ \nonumber\\
&\bar\fvec\cdot \Big(c_pm^{-1}\evec_{\pvec} + \evec_{\vvec}\Big).
\end{align}
Furthermore, by~\eqref{eq:dotsimplificaiton}, we have that
\begin{equation}
\|\bar\fvec\| \leq \|\fvec_d\|\|(\evec_3\R_d\Tr\R\evec_3)\R\evec_3 - \R_d\evec_3\|,
\end{equation}
where the second term is recognized as the sine angle of the eigenaxis rotation angle between $\bvec_{d3}$ and $\bvec_3$, as pointed out in~\cite{lee2010geometric}. As such, we can utilize the fact that this rotation angle is bounded in the control errors on $D$, as per~\eqref{eq:firstobs}. By this simple observation, we obtain
\begin{align*}
\|\bar\fvec\| &\leq \|\fvec_d\|\|(\evec_3\R_d\Tr\R\evec_3)\R\evec_3 - \R_d\evec_3\|\\
&\leq (k_p\|\evec_{\pvec}\| + k_v\|\evec_{\vvec}\| + B)\|(\evec_3\R_d\Tr\R\evec_3)\R\evec_3 - \R_d\evec_3\|\\
&\leq (k_p\|\evec_{\pvec}\| + k_v\|\evec_{\vvec}\| + B) 4\|\evec_{\Xbf}\|\\
&\leq (k_p\|\evec_{\pvec}\| + k_v\|\evec_{\vvec}\| + B) \alpha,
\end{align*}
where the second inequality follows from assumption (A3), and the third and fourth hold for all trajectories on $D$ from the observation regarding the sine angle in~\eqref{eq:secondobs}. Insertion of this bound in~\eqref{eq:FSFFull:lyapderiv} yields
\begin{align}\label{eq:FSFFull:lyapderiv}
\dot\Lyap^p =& -(k_v-c_p)\|\evec_{\vvec}\|^2 - \frac{c_pk_p}{m}\|\evec_{\pvec}\|^2  - \frac{c_pk_v}{m}(\evec_{\pvec}\cdot \evec_{\vvec})
+ \bar\fvec\cdot \Big(\frac{c_p}{m}\evec_{\pvec} + \evec_{\vvec}\Big)\nonumber\\
=& -(k_v(1-\alpha)-c_p)\|\evec_{\vvec}\|^2 - \frac{c_pk_p}{m}(1 - \alpha)\|\evec_{\pvec}\|^2\nonumber\\
&+ \frac{c_pk_v}{m}(1 + \alpha)\|\evec_{\pvec}\|\|\evec_{\vvec}\|\nonumber\\
&+4 \|\evec_{\Xbf}\|\Big(B\Big(\frac{c_p}{m}\|\evec_{\pvec}\| + \|\evec_{\vvec}\|\Big) + k_p\|\evec_{\pvec}\| \|\evec_{\vvec}\|\Big)\nonumber\\
&\leq 
-\zvec_p\Tr \W^{pp}\zvec_p + \zvec_p\Tr\W^{pa} \zvec_a,
\end{align}
with $\W^{pp}$ and $\W^{pa}$ defined as in the proposition statement, in ~\eqref{eq:FSFFullMatrices} and ~\eqref{eq:FSFFullMatrices2}, respectively.
In addition, we note that 
\begin{equation}
\zvec_p\Tr\M^{pp}_1\zvec_p\leq 
\Lyap^p \leq 
\zvec_p\Tr\M^{pp}_2\zvec_p,
\end{equation}
for
$\M_1^{pp}$ and $\M_2^{pp}$ defined as in~\eqref{eq:FSFFullMatrices} of the proposition.

\subsubsection*{Complete Lyapunov function candidate}
In addition to these definitions, take $\M_1^{aa}, \M_2^{aa}, \W^{aa}$, to be the matrices in~\eqref{eq:thm:attitudeSUTcont:domain} associated with the controller in Theorem~\ref{thm:attitudeSUTcont}, also given in~\eqref{eq:FSFFullMatrices}. In addition, define the matrices
\begin{align}
\bar\M_1 &= \mathrm{diag}(\M_1^{pp},\M_1^{aa}),&
\bar\M_2 &= \mathrm{diag}(\M_2^{pp},\M_2^{aa}),&
\bar\W &= \begin{bmatrix}
\W^{pp} & -\frac12\W^{pa} \\
\star & \W^{aa}
\end{bmatrix}.
\end{align}
where $\bar\M_i\succ \Z$ if $\M_i^{jj}\succ \Z$. For the combined Lyapunov function candidate $\Lyap = \Lyap^p + \Lyap^a$, we find that
\begin{equation}
\bar\zvec\Tr\bar\M_1\bar\zvec\leq \Lyap \leq \bar\zvec\Tr\bar\M_2\bar\zvec.
\end{equation}
Differentiating $\Lyap$ along the closed-loop solutions on $D$ yields
\begin{equation}
\dot\Lyap = \dot\Lyap^p + \dot\Lyap^a \leq -\bar\zvec\Tr\bar\W\bar\zvec.
\end{equation}
By assumption (A5), we have that
\begin{align}
\dot\Lyap&\leq -\bar\zvec\Tr\bar\W\bar\zvec\nonumber\\
&\leq -\mineig(\W^{pp})\|\zvec_p\|^2 + \|\W^{pa}\|\|\zvec_p\|\|\zvec_a\|-\mineig(\W^{aa})\|\zvec_a\|^2\nonumber\\
&\leq-\begin{bmatrix}
\|\zvec_p\|\\
\|\zvec_a\|
\end{bmatrix}
\begin{bmatrix}
\mineig(\W^{pp}) & -\frac{1}{2}\|\W^{pa}\|\nonumber\\
\star & \mineig(\W^{aa})
\end{bmatrix}
\begin{bmatrix}
\|\zvec_p\|\\
\|\zvec_a\|
\end{bmatrix}\nonumber\\
&\leq 
-B_{z}(\|\zvec_p\|^2 + \|\zvec_a\|^2).\label{eq:lyaptimederivFSFfull}
\end{align}
Consequently, the Lyapunov function time-derivative is negative definite in $\bar\zvec$ along the solutions of the error dynamics on $D$. By~\eqref{eq:lyaptimederivFSFfull}, it also follows that $\Lyap$ is continuously differentiable on $D$ as all of the signals constituting $\ddot{\Lyap}$ are bounded in the initial errors. As such,~\protect{\cite[Theorem 4.10]{khalil2002nonlinear}} yields that the origin $\bar\zvec = \Z$ is UES on the domain $D$.
\end{proof}
}

\end{document}